\newenvironment{proofof}[1]{\addvspace{\bigskipamount}\noindent{\em Proof of #1. }}%
{\noindent $\Box$\par\addvspace{\bigskipamount}}
\def\th@plain{%
  \thm@notefont{}
  \itshape 
}
\def\th@definition{%
  \thm@notefont{}
  \normalfont 
}
\newtheorem{prop}{Proposition}
\newtheorem{theorem}[prop]{Theorem}
\newtheorem{corollary}[prop]{Corollary}
\newtheorem{lemma}[prop]{Lemma}
\newtheorem{claim}[prop]{Claim}
\newtheorem{definition}{Definition}[section]
\newtheorem{remark}{Remark}
\newcommand{\deq}{\mathrel{:=}}  
\newcommand{\accz}{\mathsf{ACC}^0}
\newcommand{\nexp}{\mathsf{NEXP}}
\newcommand{\bollobas}{Bollob\'{a}s\xspace}
\newcommand{\pudlak}{Pudl\'{a}k\xspace}
\DeclareMathOperator{\dcc}{D}
\DeclareMathOperator{\smcc}{D^{\parallel}}
\newcommand{\mpj}{\textsc{mpj}\xspace}
\newcommand{\mpjh}{\widehat{\textsc{mpj}}\xspace}
\newcommand{\plr}{\textsc{plr}\xspace}
\newcommand{\erdosrenyi}{Erd\H{o}s-R\'{e}nyi\xspace}
\newcommand{\erdos}{Erd\H{o}s\xspace}
\DeclareMathOperator{\cost}{\mathrm{cost}}
\newcommand{\independent}{\textsc{uncorrelated}\xspace}
\newcommand{\nonboolean}{non-Boolean\xspace}
\newcommand{\iskclique}{\mbox{ is k-clique}}
\newcommand{\cE}{\mathcal{E}}
\newcommand{\cL}{\mathcal{L}}
\newcommand{\cP}{\mathcal{P}}
\newcommand{\cR}{\mathcal{R}}
\newcommand{\cQ}{\mathcal{Q}}
\newcommand{\eps}{\varepsilon}
\newcommand{\xor}{\textsc{xor}}
\newcommand{\ldgraphname}{dependent random graph}
\newcommand{\ldgraph}[3]{G_{#3}(#1,#2)}
\newcommand{\gnpd}{\ldgraph{n}{p}{d}}
\newcommand{\gn}[2]{\ldgraph{n}{#1}{#2}}
\newcommand{\gnp}{\ldgraph{n}{p}{}}
\newcommand{\RR}{\mathbb{R}}
\newcommand{\xv}{\textbf{X}}
\newcommand{\SnotX}[2]{#1\setminus#2}
\DeclareMathOperator{\E}{\mathbb{E}}
\DeclareMathOperator{\polylog}{\mathrm{polylog}}
\DeclareMathOperator{\clique}{\mathrm{clique}}
\DeclareMathOperator{\chromatic}{\mathrm{\chi}}
\DeclareMathOperator{\sm}{\setminus}
\title{Dependent Random Graphs and Multiparty Pointer Jumping}
\author{
  Joshua Brody \qquad Mario Sanchez \\ Swarthmore College \\ {\small joshua.e.brody@gmail.com, msanche1@swarthmore.edu}
}
\date{}
\begin{document}
\maketitle
\begin{abstract}
  We initiate a study of a relaxed version of the standard \erdosrenyi
  random graph model, 
  where each edge may depend on a few other edges.
  We call such graphs \emph{\ldgraphname s}.  Our main result
  in this direction is a thorough understanding of the clique number
  of \ldgraphname s.  We also obtain bounds for the chromatic number.
  Surprisingly, many of the standard properties of random graphs also
  hold in this relaxed setting.  We show that with high
  probability, a \ldgraphname\xspace will contain a clique of size
  $\frac{(1-o(1))\log(n)}{\log(1/p)}$, and the chromatic number will be at
  most $\frac{n \log(1/(1-p))}{\log n}$.  We expect these results to be of
  independent interest.
  As an application and second main result, we give a new communication
  protocol for the $k$-player Multiparty Pointer Jumping ($\mpj_k$)
  problem in the number-on-the-forehead (NOF) model.  Multiparty
  Pointer Jumping is one of the canonical NOF communication problems,
  yet even for three players, its communication complexity is not well
  understood.  Our protocol for $\mpj_3$ costs $O(n (\log \log n)/\log
  n)$ communication, improving on a bound from~\cite{BrodyC08}.  We
  extend our protocol to the non-Boolean pointer jumping problem
  $\mpjh_k$, achieving an upper bound which is $o(n)$ for any $k\geq
  4$ players.  This is the first $o(n)$ protocol for $\mpjh_k$ and
  improves on a bound of Damm, Jukna, and Sgall~\cite{DammJS98}, which
  has stood for almost twenty years.
\end{abstract}
\section{Introduction} \label{sec:intro}
\paragraph{Random Graphs.}
The study of random graphs revolves understanding the following
distribution on graphs: Given $n$ and $p$, define a distribution on
$n$ vertex graphs $G=(V,E)$ by placing each edge $(i,j) \in E$
\textbf{independently} with probability $p$. The first paper on this
topic, authored by \erdos~and~R\'enyi~\cite{ErdosR59}, focused on
connectivity of graphs. Later, \bollobas~and~\erdos~\cite{Bollobas76}
found the interesting result that almost every graph has a clique
number of either $r$ or $r+1$, for some $r \approx \frac{2\log n}{\log
  1/p}$. This remarkable concentration of measure result led to further
investigations of these graphs. Then, \bollobas~\cite{Bollobas88}
solved the question of the chromatic number and showed that almost
every graph has chromatic number $(1+o(1)) \frac{-n\log{(1-p)}}{2\log
  n}$. For more details, consult \bollobas~\cite{Bollobas-rg-book} and
Alon~and~Spencer~\cite{AlonSpencer-book}.

We extend this model by allowing each edge to depend on up to $d$
other edges.  We make no a priori assumptions on \emph{how} the edges
depend on each other except that edges must be independent of all but
at most $d$ other edges.  This defines a family of graph
distributions. We initiate a study of dependent random graphs by
considering the clique number and the chromatic number.  As far as we
know, this is the first work to systematically study such
distributions.
%
However, other relaxations of the standard random graph model have
been studied.  The most relevant for us is that of Alon and
Nussboim~\cite{AlonN08}, who study random graphs where edges are
$k$-wise independent.  \cite{AlonN08} give tight bounds for several
graph properties, including the clique number, the chromatic number,
connectivity, and thresholds for the appearance of subgraphs.
The bounds for k-wise independent graph properties are not as
tight as the standard random graphs, but this is to be expected since
$k$-wise independent random graphs are a family of distributions
rather than a single distribution.  Our dependent random graphs
similarly represent a family of graph distributions.  However,
dependent random graphs are generally not even almost $k$-wise
independent, even for small values of $d$.

\paragraph{NOF Communication Complexity.}  
As an application of our dependent random graphs, we study multiparty
communication problems in the Number-On-The-Forehead (NOF)
communication model defined by Chandra et al.~\cite{ChandraFL83}.  In
this model, there are $k$ players $\plr_1, \cdots, \plr_k$ who wish to
compute some function $f(x_1,\ldots, x_k)$ of their inputs using the
minimal communication possible.  Initially, players share a great deal
of information: each $\plr_i$ sees every input \emph{except}
$x_i$.\footnote{Imagine $x_i$ being written on $\plr_i$'s forehead.
  Then, $\plr_i$ sees inputs on other players' foreheads, but not his
  own.}  Note that a great deal of information is shared before
communication begins; namely, all players except $\plr_i$ see $x_i$.
As a result, for many functions little communication is needed.
Precisely how this shared information affects how much communication
is needed is not currently well understood, even when limiting how
players may communicate.  We consider two well-studied models of
communication.  In the \emph{one-way} communication model, players
each send exactly one message in order (i.e., first $\plr_1$ sends his
message, then $\plr_2$, etc.)  In the \emph{simultaneous-message} (or
SM) model, each player simultaneously sends a single message to a
referee, who processes the messages and outputs an answer.  We use
$\dcc(f)$ and $\smcc(f)$ to denote the communication complexity of $f$
in the one-way and simultaneous-message models respectively.

To date, no explicit function is known which requires a polynomial
amount of communication for $k = O(\polylog n)$ players in the SM
model.  Identifying such a function represents one of the biggest
problems in communication complexity.  Furthermore, a chain of results
\cite{Yao90,HastadG91,BeigelT94} showed that such a lower bound would
place $f$ outside of the complexity class $\accz$.  $\accz$ lies at
the frontier of our current understanding of circuit complexity, and
until the recent work of Williams~\cite{Williams14} it wasn't even
known that $\nexp \nsubseteq \accz$.  The Multipary Pointer Jumping problem is
widely conjectured to require enough communication to place it outside
of $\accz$.  This motivates our study.

\paragraph{The Pointer Jumping Problem.}
There are many variants of the pointer jumping problem.  Here, we
study two: a Boolean version $\mpj_k^n$, and a non-Boolean version
$\mpjh_k^n$.  (From now on, we suppress the $n$ to ease notation).  We
shall formally define these problems in Section~\ref{sec:prelim}, but
for now, each may be described as problems on a directed graph
that has $k+1$ layers of vertices $L_0, \ldots, L_k$.  The first layer
$L_0$ contains a single vertex $s_0$, and layers $L_1,\ldots, L_{k-1}$
contain $n$ vertices each.  In the Boolean version, $L_k$ contains two
vertices, while in the non-Boolean version $L_k$ contains $n$
vertices.  For inputs, each vertex in each layer except $L_k$ has a
single directed edge pointing to some vertex in the next layer.  The
output is the the unique vertex in $L_k$ reachable from $s_0$; i.e.,
the vertex reached by starting at $s_0$ and ``following the pointers''
to the $k$th layer.  Note that the output is a single bit for $\mpj_k$
and a $\log n$-bit string for $\mpjh_k$.  To make this into a
communication game, we place on $\plr_i$'s forehead all edges from
vertices in $L_{i-1}$ to vertices in $L_i$.  If players
speak in any order except $\plr_1, \cdots, \plr_k$, there is an easy
$O(\log n)$-bit protocol for $\mpj_k$.  
%

%
This problem was first studied by Wigderson,\footnote{This was
  unpublished, but an exposition appears in~\cite{BabaiHK01}.} who
gave an $\Omega(\sqrt{n})$ lower bound for $\mpj_3$.  This was later
extended by Viola and Wigderson~\cite{ViolaW07}, who showed that
$\mpj_k$ requires $\tilde{\Omega}(n^{1/(k-1)})$ communication, even
under randomized communication.  On the upper-bounds side, Pudlak et
al.~\cite{PudlakRS97} showed a protocol for $\mpj_3$ that uses only
$O\left(n(\log \log n)/\log n\right)$ communication, but only works
when the input on $\plr_2$'s forehead is a permutation.  Damm et
al.~\cite{DammJS98} show that $D(\mpjh_3) = O(n \log \log n)$ and
$D(\mpjh_k) = O(n \log^{(k-1)} n)$, where $\log^{(r)} n$ is the $r$th
iterated log of $n$.  Building on~\cite{PudlakRS97}, Brody and
Chakrabarti~\cite{BrodyC08} showed $D(\mpj_3) = O\left(n\sqrt{(\log
  \log n)/\log n}\right)$; they give marginal improvements for
$\mpj_k$ for $k > 3$.  Despite the attention devoted to this problem,
the upper and lower bounds remain far apart, even for $k=3$ players,
where $D(\mpj_3) = \Omega(\sqrt{n})$ and $D(\mpj_3) =
O(n\sqrt{(\log\log n)/\log n})$.  For this reason, in this work we
focus on $\mpj_k$ and $\mpjh_k$ for small values of $k$.  We strongly
believe that fully understanding the communication complexity of
$\mpj_3$ will shed light on the general problem as well.

\subsection{Our Results}
We give two collections of results: one for dependent random graphs,
and the other for the communication complexity of $\mpj_k$ and
$\mpjh_k$.  For our work on dependent random graphs, we focus on the
clique number and on the chromatic number.  The clique number of a
graph $G$, denoted $\clique(G)$, is the size of the largest clique;
the chromatic number $\chromatic(G)$ is the number of colors needed to
color the vertices such that the endpoints of each edge have different
colors.  We use e.g. $\clique(\gnpd)$ to refer to $\clique(G)$ for
some $G\sim \gnpd$.  We achieve upper and lower bounds for each graph
property.  Say that a graph property $P$ holds \textbf{almost surely
  (a.s.)} if it holds with probability approaching $1$ as $n$
approaches $\infty$ i.e. if $P$ holds with probability $1-o(1)$.

Our strongest results\footnote{Our choice of $p$ is motivated by what was needed to
  obtain the communication complexity bounds for $\mpj_k$.  We suspect
  that tweaking our technical lemmas will give bounds for any constant
  $p$.} give a lower bound for $\clique(\gnpd)$ and an
upper bound for $\chromatic(\gnpd)$.
\begin{theorem}\label{thm:clique-lb}
  If $0< p < 1/4$ and $d/p << \sqrt{n}$, then $\gnpd$ almost surely
  has a clique of size $\Omega\left(\frac{\log n}{\log 1/p}\right)$.
\end{theorem}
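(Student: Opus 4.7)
The plan is a subsampling reduction to the classical \erdosrenyi model: find a vertex subset $W \subseteq V(G)$ of polynomial size whose induced edges are all mutually independent, so that $G[W]$ is distributed exactly as the standard \erdosrenyi graph on $|W|$ vertices with edge probability $p$; the desired clique in $G[W]$ then follows from the classical \bollobas-\erdos clique theorem.

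\textbf{Step 1: find a good subset $W$.} Let $H$ be the dependency graph on the $\binom{n}{2}$ potential edges of $K_n$; by hypothesis $H$ has maximum degree at most $d$, and so at most $\binom{n}{2}\,d/2$ edges. Pick $W \subseteq V$ uniformly at random of size $w = c(n/d)^{1/3}$ for a small constant $c > 0$. For each dependent edge pair $(e,f) \in E(H)$, the union $V(e) \cup V(f)$ consists of at least $3$ vertices, so $\Pr[V(e) \cup V(f) \subseteq W] \leq (w/n)^3$. Summing over all dependent pairs, the expected number of dependent pairs contained in $E(W)$ is $O(dw^3/n) = O(1)$, and taking $c$ small makes it $\leq 1/2$. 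Hence by the probabilistic method some deterministic $W$ of size $w$ has no dependent pair in its edge set $E(W)$.

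\textbf{Step 2: apply \bollobas-\erdos to $G[W]$.} Fix such a $W$. Since no two edges of $E(W)$ are dependent, each edge of $E(W)$ is independent of the $\sigma$-algebra generated by the others, and an easy induction shows the $\binom{w}{2}$ edges of $E(W)$ are jointly independent Bernoulli$(p)$. Therefore $G[W]$ is distributed exactly as the \erdosrenyi graph on $w$ vertices with edge probability $p$, and almost surely contains a clique of size $(1-o(1))\cdot 2\log w/\log(1/p)$. The assumption $d/p \ll \sqrt{n}$ gives $n/d \gg \sqrt{n}/p$, whence $\log w = \Omega(\log(n/d)) = \Omega(\log n + \log(1/p))$, which yields a clique in $G$ of size $\Omega(\log n/\log(1/p))$.

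The main obstacle is the counting argument in Step 1, where one must handle dependent pairs that share either three or four endpoints and verify that the maximum-degree bound on $H$ is enough to keep the expected number of bad pairs small for $w$ as large as $\Theta((n/d)^{1/3})$. Once Step 1 is in hand, Step 2 is essentially a black-box invocation of the classical \erdosrenyi clique theorem. An approach trying to apply the second moment method directly on $G$ (counting all $k$-cliques in $G$ itself) looks harder, since bounding $E[X^2]$ requires controlling dependencies simultaneously across pairs of $k$-sets; the subsampling approach sidesteps this by isolating a piece of $G$ where classical methods apply unchanged.
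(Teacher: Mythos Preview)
Your argument is correct for the theorem as stated, and it takes a genuinely different route from the paper's.

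The paper does not reduce to the \erdosrenyi model. It works directly in $\gnpd$: it lower-bounds the expected maximum number $Y$ of \emph{edge-disjoint} \independent $k$-cliques via a second-moment-type count on pairs of intersecting \independent $k$-cliques (precisely the ``harder'' computation you flagged in your last paragraph), and then applies Azuma's inequality to the edge-exposure martingale for $Y$, using that conditioning on a single edge can move $Y$ by at most $d$. You instead isolate a deterministic $W$ of size $\Theta((n/d)^{1/3})$ whose induced edges are jointly independent---your deduction of joint independence from the dependency-graph definition is correct, since that definition makes each $X_e$ independent of the full $\sigma$-algebra of its non-neighbors---and then invoke the classical clique theorem on $G[W]\sim G(w,p)$ as a black box.

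Your argument is shorter and sidesteps the variance computation entirely. The cost is quantitative: the paper's martingale yields $\Pr[\clique(\gnpd)<k]\le\exp(-n^{1+\eps})$, whereas your reduction can give at best $\exp\bigl(-w^{2-o(1)}\bigr)=\exp\bigl(-(n/d)^{2/3-o(1)}\bigr)$, since the classical bound is being applied only on a subgraph of polynomial size $w$. That stronger tail is not cosmetic in this paper: the chromatic-number proof union-bounds over $\binom{n}{m}$ induced subgraphs with $m=n/\log^2 n$, and the $\mpj_3$ protocol union-bounds over all $n^n$ functions $f$; neither survives a failure probability as large as $\exp(-n^{2/3})$. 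So your approach proves the stated theorem cleanly but would not feed into the downstream applications.
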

\begin{theorem}\label{thm:chromatic-ub}
  If $3/4<p<1$ and $d = n^{o(1)}$ then almost surely
  $\chromatic(\gnpd) \leq (1+\eps)\frac{-n\log(1-p)}{\log n}$.
\end{theorem}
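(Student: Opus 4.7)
The strategy is to color the graph greedily by repeatedly extracting large independent sets, which we obtain from the clique theorem applied to the complement. Observe that if $G \sim \gnpd$, then its complement $\bar G$ is distributed as $\gn{1-p}{d}$: complementing edges keeps each edge's marginal probability at $1-p$ and leaves the dependency structure unchanged, since each indicator is replaced by its negation. Because $3/4 < p < 1$ gives $1-p < 1/4$, a tight form of Theorem~\ref{thm:clique-lb} (with the sharp $(1-o(1))$ leading constant promised in the abstract) applied to $\bar G$ produces an independent set in $G$ of size $\alpha \geq (1-\eps')\tfrac{\log n}{\log(1/(1-p))}$ almost surely.

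To color, I would iteratively extract such an independent set, assign it a fresh color, and delete its vertices. For this procedure to run for the required $\approx n/\alpha$ rounds, the sharp clique bound must hold not just for $G$ but for every induced subgraph encountered along the way. Any induced subgraph of a dependent random graph is again a dependent random graph with the same parameters $p$ and $d$, so I would union-bound the failure probability of the sharp clique lemma over all $\binom{n}{m}$ vertex subsets $S$ of size $m$, for $m$ ranging from $n$ down to a cutoff $m^* = n/\polylog(n)$. Once fewer than $m^*$ vertices remain, color each remaining vertex with its own color; this contributes only $o(n/\log n)$ additional colors, which is absorbed into the $(1+\eps)$ slack.

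Accounting for the two phases gives a total color count of $(1+o(1))\tfrac{n}{\alpha} + o(n/\log n) \le (1+\eps)\tfrac{-n\log(1-p)}{\log n}$, as required. The main technical obstacle is ensuring that the clique lower bound survives the union bound over induced subgraphs: this demands both the sharp $(1-o(1))$ leading constant and a failure probability of order $2^{-\omega(n)}$, neither of which follows directly from the stated $\Omega$-form of Theorem~\ref{thm:clique-lb}. I expect to establish this by refining the clique existence argument, most likely via a second-moment estimate on the number of $k$-cliques in $\bar G[S]$, exploiting the $d$-bounded dependency structure to control the pairwise covariances between clique indicators and thereby drive the failure probability well below $\binom{n}{m}^{-1}$.
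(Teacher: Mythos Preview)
Your approach is correct and essentially matches the paper's proof: greedy extraction of large independent sets via the clique bound on the complement, union-bounded over induced subgraphs, with a polylog cutoff to handle the tail. The paper streamlines slightly by always extracting the independent set from a fixed-size batch of $m = n/\log^2 n$ uncolored vertices (so only size-$m$ subgraphs enter the union bound), and the sharp clique bound you anticipate needing---failure probability $\exp(-m^{1+\eps})$ with leading constant $(1-o(1))$ when $d = n^{o(1)}$---is exactly what the formal restatement (Theorem~\ref{thm:clique}) provides, via an edge-exposure martingale plus Azuma rather than a pure second-moment bound.
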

These bounds nearly match similar results for \erdosrenyi random
graphs.  Our bounds on the other side are not as tight.
\begin{theorem}\label{thm:clique-ub}
  If $0<p<1$ and $d \leq n/\log^2n$, then almost surely
  $\clique(\gnpd) \leq d\log n$.
\end{theorem}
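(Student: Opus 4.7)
The plan is a first-moment / union-bound argument, exploiting the dependency structure to upper bound the probability that any fixed $k$-subset of vertices forms a clique.

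First I would fix a target clique size $k$ of order $d\log n$ (with a constant depending on $p$) and a vertex set $S$ with $|S|=k$. View the $\binom{k}{2}$ potential edges inside $S$ as Bernoulli$(p)$ random variables and form the dependency graph $H$ on these edges; by hypothesis, $H$ has maximum degree at most $d$. A standard greedy argument extracts an independent set $I$ in $H$ with $|I|\geq \binom{k}{2}/(d+1)$. By the definition of the dependency graph each $e_i\in I$ is jointly independent of $I\setminus\{e_i\}$, and so by the chain rule $\Pr[\text{all of } I \text{ is present}]=p^{|I|}$. Since $S$ being a clique requires every edge of $I$ to appear,
$$\Pr[S \text{ is a clique}] \;\leq\; p^{\binom{k}{2}/(d+1)}.$$

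Next I would union-bound over all $\binom{n}{k}$ vertex subsets:
$$\Pr[\clique(\gnpd)\geq k] \;\leq\; \binom{n}{k}\, p^{\binom{k}{2}/(d+1)} \;\leq\; \left(\frac{en}{k}\right)^k p^{k(k-1)/(2(d+1))}.$$
Taking logarithms, this behaves like $k\log(n/k) - \frac{k(k-1)}{2(d+1)}\log(1/p)$. For $k$ equal to a suitably large constant multiple of $d\log n$ the quadratic term dominates the linear one and the whole expression is $o(1)$. The hypothesis $d\leq n/\log^2 n$ guarantees $k\ll n$, so that the Stirling-type bound $\binom{n}{k}\leq (en/k)^k$ is valid.

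The main obstacle is conceptual rather than technical: one must verify that an independent set in the dependency graph of edges produces \emph{jointly} (not merely pairwise) independent Bernoulli variables. This is immediate from the paper's definition that each edge is independent of all but at most $d$ other edges, interpreted as joint independence with their joint distribution, but it is easy to overlook. Once this is noted, the remainder is routine bookkeeping of constants and an elementary calculation of when the union bound tends to zero.
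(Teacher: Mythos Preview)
Your argument is correct, but it takes a different route from the paper's proof.  Both proofs are union bounds over all $k$-subsets $S$ of size $\Theta(d\log n)$; the difference lies in how $\Pr[S\text{ is a clique}]$ is bounded.  The paper treats the edge indicators inside $S$ as a $d$-locally dependent family, applies Janson's concentration inequality (Lemma~\ref{lem:dependent-sum}) to their sum $Y$, and observes that the clique event is $\{Y=\binom{k}{2}\}$, yielding $\Pr[BAD_S]\le\exp\bigl(-2\binom{k}{2}(1-p)^2/(d+1)\bigr)$ and a final constant $c=1/(1-p)^2$.  You instead extract a large independent set $I$ in the edge dependency graph, note that the paper's definition of dependency graph makes the variables in $I$ \emph{jointly} independent (the point you correctly flagged), and obtain $\Pr[BAD_S]\le p^{\binom{k}{2}/(d+1)}$, with a final constant on the order of $2/\log(1/p)$.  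Your approach is more elementary---it avoids any concentration inequality and uses only the chain rule---and in fact gives a slightly smaller constant for small $p$; the paper's approach has the virtue of being a direct off-the-shelf application of a cited tail bound.
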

\begin{theorem}\label{thm:chromatic-lb}
  If $0<p<1$ and $d \leq n/\log^2 n$, then almost surely
  $\chromatic(\gnpd) \geq n/(d\log n)$.
\end{theorem}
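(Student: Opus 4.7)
The plan is to lower-bound $\chromatic(G)$ via the classical inequality $\chromatic(G) \geq n/\alpha(G)$, where $\alpha(G)$ denotes the independence number. Thus it suffices to show that almost surely $\alpha(G) \leq d\log n$ when $G \sim \gnpd$.

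The key observation is that the complement operation preserves the dependent-random-graph structure. Concretely, for each potential edge $e$, the indicator $\mathbf{1}[e \in \bar G] = 1 - \mathbf{1}[e \in G]$ is a deterministic function of $\mathbf{1}[e \in G]$ alone, so the dependency neighborhood of $e$ in $\bar G$ is identical to its dependency neighborhood in $G$. Each edge of $\bar G$ appears with marginal probability $1-p$, and each edge is independent of all but at most $d$ other edges. Hence $\bar G \sim \ldgraph{n}{1-p}{d}$.

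Since $\alpha(G) = \clique(\bar G)$, and since the hypotheses $0<p<1$ and $d \leq n/\log^2 n$ are symmetric under $p \mapsto 1-p$, I can apply Theorem~\ref{thm:clique-ub} directly to $\bar G$ to conclude that almost surely $\clique(\bar G) \leq d\log n$. Plugging this into $\chromatic(G) \geq n/\alpha(G)$ yields $\chromatic(G) \geq n/(d\log n)$ almost surely.

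There is no real obstacle here: the statement is essentially a corollary of Theorem~\ref{thm:clique-ub} together with the trivial fact that complementing a $d$-dependent random graph yields a $d$-dependent random graph with complementary edge probability. The only thing to verify carefully is that the dependency parameter is genuinely preserved under complementation, which follows because complementation acts edge-by-edge.
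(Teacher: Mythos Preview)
Your proposal is correct and matches the paper's own argument essentially verbatim: the paper also derives the bound as an immediate corollary of Theorem~\ref{thm:clique-ub}, noting that an independent set in $\gnpd$ is a clique in the complement graph, which is again $d$-dependent, and then using $\chromatic(G)\geq n/\alpha(G)$.
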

For large values of $d$, there are wide gaps in the upper and lower
bounds of clique number and chromatic number.  Are these gaps
necessary?  The existing bounds for random graphs show that
Theorems~\ref{thm:clique-lb} and~\ref{thm:chromatic-ub} are close to
optimal.  Our next result witnesses the tightness for
$\clique(\gnpd)$.
\begin{lemma}\label{lem:large-clique}
  For any $d = o(n)$ and any $0<p<1$
  \begin{enumerate}
  \item there are $d$-dependent random graphs that almost surely
    contain cliques of size $\Omega(d)$.
  \item there are $d$-dependent random graphs that almost surely
    contain cliques of size $\Omega(\sqrt{d}\log n)$.
  \end{enumerate}
\end{lemma}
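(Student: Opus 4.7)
\emph{Part 1 ($\Omega(d)$-size clique).} The plan is a ``colouring'' construction. Fix a set $S$ of $s = \floor{d/2} + 2$ vertices, and assign each $i\in S$ a colour $c_i$ drawn independently and uniformly from $[K]$, where $K = \ceil{1/p}$. For $i,j\in S$, include the edge $\{i,j\}$ whenever $c_i = c_j$; in addition, independently flip a boost coin of bias $(p-1/K)/(1-1/K)$ and include the edge on heads. Every edge with at least one endpoint outside $S$ is drawn independently with probability $p$. Each within-$S$ edge has marginal $p$ and shares randomness only with the $2(s-2) = d$ other edges incident to $i$ or $j$ inside $S$, while edges touching $V\setminus S$ are independent of everything else, so the graph is $d$-dependent. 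By the pigeonhole principle, some colour class inside $S$ contains at least $\ceil{s/K}=\Omega(sp)=\Omega(d)$ vertices, and those vertices automatically form a clique; this happens with probability $1$, so the claim holds almost surely.

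\emph{Part 2 ($\Omega(\sqrt{d}\log n)$-size clique).} I will use a blow-up of an \erdosrenyi random graph. Set $k = \floor{\sqrt{d}}$ and work with $n/k$ super-vertices. Draw independent Bernoulli$(p)$ random variables $Z_{ij}$ for each pair of super-vertices and $W_i$ for each super-vertex. Replace each super-vertex $i$ by a group $G_i$ of $k$ genuine vertices. In the blow-up, declare an edge between $u\in G_i$ and $u'\in G_j$ ($i\neq j$) present iff $Z_{ij}=1$, and an edge inside $G_i$ present iff $W_i=1$. Every edge has marginal $p$. A cross-group edge shares its defining bit $Z_{ij}$ with exactly $k^2-1 \le d-1$ other edges, and a within-group edge shares $W_i$ with only $\binom{k}{2}-1$ others, so the construction is $d$-dependent. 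Any $t$ super-vertices that induce a clique in the $Z$-graph and on which every $W_i$ equals $1$ blow up to a clique of size $tk$ in the final graph; the expected number of such configurations is $\binom{n/k}{t}\,p^{\binom{t}{2}+t}$. A mild adaptation of the classical second-moment calculation for cliques in $G(n,p)$ (the extra $p^t$ factor only shifts the threshold by a constant) shows that $t=(2-o(1))\log(n/k)/\log(1/p)=\Theta(\log n)$ is almost surely attained, yielding a clique of size $kt=\Omega(\sqrt{d}\log n)$.

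\emph{Main obstacle.} Both constructions require care in tuning the group size so that the dependency count is provably at most $d$ while the clique still comes out at the claimed size. Once the constructions are fixed, Part 1 reduces to pigeonhole and Part 2 reduces to a slight variant of the classical second-moment argument for $G(n,p)$ cliques, neither of which should pose serious difficulty; the plan would stand or fall on whether the constructions admit the claimed dependency bound.
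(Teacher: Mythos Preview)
Your proposal is correct. Part~2 is essentially the same blow-up construction the paper uses: the paper partitions $[n]$ into $n/\sqrt d$ groups of size $\sqrt d$, couples all edges between (and within) groups via single $\mathrm{Bern}(p)$ bits, and then reads off the clique. The only difference is in the analysis: rather than your direct second-moment computation on $t$-sets satisfying both the $Z$-clique and the $W_i=1$ conditions, the paper first conditions on the set $S=\{i:W_i=1\}$, notes that the super-graph restricted to $S$ is a genuine \erdosrenyi graph $G(|S|,p)$ with $|S|=\Theta(n/\sqrt d)$ w.h.p., and cites the classical result $\clique(G(m,p))=\Theta(\log m)$. This is slightly cleaner than redoing the second-moment calculation, but the two routes are equivalent.

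Part~1 is a genuinely different construction. The paper partitions \emph{all} $n$ vertices deterministically into $2n/d$ colour classes of size $d/2$, introduces independent bits $X_{i,c}\sim\mathrm{Bern}(\sqrt p)$ indexed by (vertex, colour) pairs, and sets $\{i,j\}\in E$ iff $X_{i,c(j)}\wedge X_{j,c(i)}=1$; the clique is the random set $S_c=\{i:c(i)=c,\ X_{i,c}=1\}$ inside one colour class, of expected size $d\sqrt p/2$, and a Chernoff bound together with a union over the $2n/d$ classes yields the high-probability statement. Your approach instead plants all the dependence on a single set of $\lfloor d/2\rfloor+2$ vertices, uses random colours plus a boost coin to hit the marginal $p$ exactly, and extracts the clique deterministically by pigeonhole. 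The trade-off: your clique has size $\Theta(dp)$ rather than the paper's $\Theta(d\sqrt p)$, which is immaterial here since $p$ is constant; in exchange your argument is shorter and yields the clique with probability $1$ rather than $1-o(1)$. Both the ``global'' construction in the paper and your ``local'' one are valid $d$-dependent graphs, and for the lemma as stated yours is the simpler route.
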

This result shows that Theorem~\ref{thm:clique-ub} is also close to
optimal.  It also demonstrates that tight concentration of measure
does not generally hold for dependent random graphs, even for small
values of $d$.  Nevertheless, we expect that for many specific
dependent random graphs, tight concentration of measure results will
hold.
Finally, we give two simple constructions which
show that with too much dependence, very little can be said about
$\clique(\gnpd)$.
\begin{lemma}\label{lem:larged}
For any $d \geq 2n$, the following statements hold.
  \begin{enumerate}
  \item For any $0<p<1$, there exists a $d$-dependent random
    graph $\gnpd$ that is bipartite with certainty.
  \item For any $1/2 \leq p < 1$, there exists a $d$-dependent random
    graph $\gnpd$ that contains a clique of size $n/2$ with certainty.
  \end{enumerate}
\end{lemma}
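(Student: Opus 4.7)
The plan for both parts is to give explicit constructions from local random bits, then verify uniform marginal $p$, the claimed structural property, and bounded dependence.

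For part 1, the plan is to sample a random bipartition via $n$ independent fair coins and include only cross-partition edges. Let $Z_1, \ldots, Z_n$ be i.i.d.\ Bernoulli$(1/2)$ and let $B_{uv}$ be mutually independent Bernoulli$(2p)$ coins (also independent of the $Z_i$). Place edge $(u,v)$ iff $Z_u \neq Z_v$ and $B_{uv} = 1$. Then the graph is bipartite with certainty with respect to the partition determined by the $Z_i$, the marginal of each edge is $\Pr[Z_u \neq Z_v] \cdot 2p = p$, and edge $(u,v)$ depends only on $Z_u$, $Z_v$, and $B_{uv}$, hence is independent of any edge sharing no vertex with it, leaving at most $2(n-1) - 1 = 2n - 3 \leq d$ dependent edges. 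The construction requires $2p \leq 1$; for $p > 1/2$ an exactly-bipartite graph with uniform marginal $p$ is infeasible since $p \binom{n}{2} > n^2/4$ exceeds the maximum bipartite edge count, so the statement should be interpreted for $p \leq 1/2$ (or with marginals at most $p$).

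For part 2, the plan is to fix the clique's vertex set through one independent coin per pair. Partition $V$ into pairs $P_i = \{2i-1, 2i\}$, let $X_1, \ldots, X_{n/2}$ be i.i.d.\ Bernoulli$(1/2)$, and define $S \deq \{2i-1 + X_i : 1 \leq i \leq n/2\}$, so $|S| = n/2$ with certainty. Force every edge between two vertices of $S$ to be present (producing the clique on $S$); for every remaining potential edge, include it via an independent fresh Bernoulli with probability chosen to make the overall marginal equal $p$. A short calculation gives $\Pr[u,v \in S] = 0$ for same-pair edges (so the fresh probability is set to $p$) and $\Pr[u,v \in S] = 1/4$ for across-pair edges (so the fresh probability is set to $(4p-1)/3$); both are nonnegative for $p \geq 1/2$. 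By construction, every edge has marginal $p$ and $G$ contains a clique on $S$ of size exactly $n/2$ with certainty.

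The main step requiring care is the dependence count for part 2. An across-pair edge $(u,v)$ with $u \in P_i, v \in P_j$ depends only on $X_i$, $X_j$, and its fresh coin, while same-pair edges depend only on their fresh coin (since $\Pr[u,v \in S] = 0$ makes their forced term identically zero). Hence $(u,v)$ is independent of every edge whose endpoints all lie outside $P_i \cup P_j$, and counting edges incident to $P_i \cup P_j$ gives at most $4(n-1) - O(1) = O(n)$ dependent edges. This establishes $d$-dependence for $d = \Theta(n)$, matching the statement qualitatively. The main obstacle I expect is pinning down the exact constant: the naive count yields roughly $4n$ dependent edges, so to match the hypothesis $d \geq 2n$ precisely one would need either a tighter bookkeeping of the overlap or a slightly different coupling (for instance, arranging the pair-coin structure so each edge shares only one pair-coin with most of its ``neighborhood'') that reduces per-edge dependence to $\leq 2n - O(1)$.
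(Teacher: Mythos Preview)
For part~1 your idea is essentially the paper's: assign a random bit to each vertex and keep only cross-class edges, so dependence is confined to edges sharing an endpoint and the dependency degree is at most $2(n-1)-1<2n$. The paper uses biased vertex coins and the rule ``edge iff $X_i\oplus X_j=1$'' with no extra edge-coins, while you use fair vertex coins plus independent edge-thinning; both variants work. You are also correct that no bipartite-with-certainty construction can have uniform marginal $p$ once $p$ exceeds (roughly) $1/2$, since then $p\binom{n}{2}$ would exceed $\lfloor n^2/4\rfloor$; the paper's own construction shares this limitation.

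For part~2, however, there is a genuine gap. In your pair-based scheme each cross-pair edge $(u,v)$ with $u\in P_i$, $v\in P_j$ depends on both pair-coins $X_i,X_j$, so its dependency neighborhood includes every edge touching $P_i\cup P_j$---about $4n$ edges, not $2n$. This is not bookkeeping slack: for $p<1$ the edges $(u,v)$ and $(u,w)$ with $w\in P_k$ ($k\ne i,j$) are genuinely correlated through $X_i$, so the count cannot be pruned down to $2n$ without changing the construction. The paper sidesteps this with a simpler device that uses only one bit per \emph{vertex}: take i.i.d.\ $X_1,\dots,X_n$ with $\Pr[X_i=1]=q=\tfrac12\bigl(1-\sqrt{2p-1}\bigr)$ and put $\{i,j\}$ in the graph iff $X_i=X_j$. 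Then each edge has probability $q^2+(1-q)^2=p$ (valid exactly when $p\ge 1/2$), dependence is confined to edges sharing a vertex so the degree is at most $2(n-1)$, and the two color classes $\{i:X_i=0\}$ and $\{i:X_i=1\}$ are both cliques, one of size at least $n/2$. The idea you are missing is to derive the clique from vertex-level randomness alone, so that correlation never spreads beyond incident edges.
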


\paragraph{Results for Multiparty Pointer Jumping.}
Our main NOF communication complexity result is a new protocol for
$\mpj_3$.
\begin{theorem}\label{thm:mpj-ub}
  $\dcc(\mpj_3) = O(n (\log \log n)/\log n)$.
\end{theorem}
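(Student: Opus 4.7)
My plan is to build on the protocol of Pudl\'ak, R\"odl, and Sgall~\cite{PudlakRS97}, which achieves the claimed $O(n(\log\log n)/\log n)$ bound for $\mpj_3$ under the restriction that $f$ is a permutation, and extend it to arbitrary functions $f$ using the dependent random graph machinery developed above. The PRS protocol partitions $[n]$ into blocks of size roughly $s = \Theta(\log n/\log\log n)$ and crucially exploits the fact that when $f$ is a permutation, the preimage $f^{-1}(B)$ of every range block $B$ is a set of size exactly $s$. This uniformity lets Alice compress $\mathbf{x}$ into $O(n/s)$ chunks at only $O(\log s)$ bits each.

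First, I would reduce to the case where $f$ is ``almost regular.'' Any ``heavy'' range element $j \in [n]$ with $|f^{-1}(j)| \gg \log n$ can be handled by Alice sending $\mathbf{x}_j$ explicitly; since there can be at most $n/\log n$ such heavy elements, this adds only $O(n/\log n)$ bits. After this preprocessing, every range element has small preimage, so $f$ looks like a regular function.

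Next, I would define an auxiliary graph $H$ on $[n]$ whose edges encode the ``collisions'' that spoil a direct PRS-style encoding, for instance $u \sim v$ when $f^{-1}(u)$ and $f^{-1}(v)$ lie in a common domain block under some canonical partition. By the regularity reduction, each vertex's neighborhood in $H$ is determined by a small $f$-local region, so $H$ is $d$-dependent with $d = n^{o(1)}$; and by suitably choosing or refining the partition, the edge density can be forced above $3/4$. Applying Theorem~\ref{thm:chromatic-ub} yields a proper coloring of $H$ with $O(n/\log n)$ colors. Alice then uses this coloring -- known to her and to Carol via their shared knowledge of $f$ -- to encode $\mathbf{x}$ at $O(\log s) = O(\log\log n)$ bits per color class, for a total of $O(n(\log\log n)/\log n)$ bits. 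Bob then sends a short $i$-dependent message, pinned down by $\mathbf{x}$ and the coloring structure, that lets Carol isolate the correct bit.

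The main obstacle will be twofold. First, defining $H$ so that its dependency structure rigorously satisfies both $d = n^{o(1)}$ and $p > 3/4$ -- these conditions pull in opposite directions (larger blocks reduce dependence but reduce density), so the block size must be tuned carefully against the regularity threshold. Second, organizing Bob's message to be consistent with the coloring-based structure that Alice imposes, despite Bob not seeing $f$: Bob's encoding must be well-defined purely in terms of $i$, $\mathbf{x}$, and Alice's transcript. If any auxiliary randomness is used to construct $H$ (for instance, a random refinement of the block partition), it must be derandomized by an averaging argument so that the resulting protocol is deterministic; I expect this to follow because the expected cost is $O(n(\log\log n)/\log n)$, so some fixed choice of randomness achieves this cost uniformly over inputs.
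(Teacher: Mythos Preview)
Your overall strategy matches the paper's at two points: the reduction to $d$-limited $f$ (Alice sends $x[j]$ for the $\leq n/d$ heavy range elements), and the plan to apply the dependent-random-graph chromatic bound to an auxiliary graph and then derandomize. But the central construction is left as an acknowledged obstacle, and the specific auxiliary graph you sketch does not supply the missing piece.

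The gap is exactly the one you flag as your ``second obstacle'': Bob does not see $f$, so his message must be determined by $i$, $x$, and objects fixed in advance. Your collision graph $H$ is built from $f$ (and a block partition), so it cannot guide Bob. You also conflate edge \emph{density} with edge \emph{probability}: Theorem~\ref{thm:chromatic-ub} applies to a random graph distribution, not to a fixed graph of density $>3/4$, so you need an explicit source of randomness over which the auxiliary graph is $d$-dependent with the right marginal. Your suggestion of a ``random refinement of the block partition'' does not obviously yield this.

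The paper resolves both issues with one move. The random object is a bipartite graph $H$ on $[n]\times[n]$ with each edge present independently with probability $p=\Theta(\log\log n/\log n)$; this $H$ is fixed once by the probabilistic method and known to all players. Bob's message is then simply $\{x[j]:(i,j)\in H\}$, which is well-defined from $i,x,H$ alone and has size $O(np)$. Alice and Carol build the graph $G_{f,H}$ on $[n]$ with $(u,v)$ an edge iff both $(u,f(v))\in H$ and $(v,f(u))\in H$; when $f$ is $d$-limited this is a $(2d-2)$-dependent random graph with edge probability $p^2$, so its complement has chromatic number $O(n\log(1/p)/\log n)=O(n\log\log n/\log n)$ by Theorem~\ref{thm:chromatic-ub}. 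Alice sends one XOR bit per clique in the resulting clique cover of $G_{f,H}$. The definition of $G_{f,H}$ is exactly what makes the two messages mesh: if $i$ lies in clique $C$, then for every $j\in C\setminus\{i\}$ the edge $(i,f(j))$ is in $H$, so Bob has already sent $x[f(j)]$, and Carol can XOR these out of Alice's clique-bit to isolate $x[f(i)]$. Note the bookkeeping differs from your sketch: Alice sends one bit per clique over $O(n\log\log n/\log n)$ cliques, not $O(\log\log n)$ bits over $O(n/\log n)$ color classes.
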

This is the first improvement in the communication complexity of
$\mpj$ since the work of Brody and Chakrabarti~\cite{BrodyC08}.  Next,
we use this protocol to get new bounds for the \nonboolean version.
\begin{theorem}\label{thm:mpjh-ub}
  $\dcc(\mpjh_4) = O\left(n\frac{(\log\log n)^2}{\log n}\right)$.
\end{theorem}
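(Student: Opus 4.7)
The plan is to extend the Boolean $\mpj_3$ protocol of Theorem~\ref{thm:mpj-ub} into a protocol for $\mpjh_4$, paying only a $\log\log n$ factor overhead instead of the naive $\log n$.

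Define $g := f_4 \circ f_3 : L_2 \to L_4$. Since $\plr_1$ sees $f_3$ and $f_4$ on the foreheads of $\plr_3$ and $\plr_4$, and similarly $\plr_2$ sees both, both of them can compute $g$. Conversely, $\plr_3$ misses $f_3$ and $\plr_4$ misses $f_4$, so neither of them knows $g$. The $\mpjh_4$ answer equals $g(v_2)$ where $v_2 := f_2(f_1(s_0))$. This exhibits $\mpjh_4$ as a three-party one-way problem with inputs $(f_1, f_2, g)$, with $g$ effectively on $\plr_3$'s forehead (the simulated $\plr_3$ being played by the original $\plr_3$), except that the answer is an element of $L_4$, i.e.~has $\log n$ bits rather than one.

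Write $g$ as $\log n$ Boolean functions $g_1, \ldots, g_{\log n}$. The $j$-th bit of the answer is the output of a Boolean $\mpj_3$ instance on $(f_1, f_2, g_j)$, solvable in $O(n(\log\log n)/\log n)$ by Theorem~\ref{thm:mpj-ub}. Running the $\log n$ instances independently gives total $O(n\log\log n)$, which is both worse than the existing $O(n\log\log\log n)$ bound of Damm--Jukna--Sgall for $\mpjh_4$ and short of our target by a factor of $\log n/\log\log n$. To save this factor, I plan to share the ``infrastructure'' of $\plr_1$'s message across the $\log n$ bit-instances. The protocol of Theorem~\ref{thm:mpj-ub} is built around a combinatorial cover of $L_1$ obtained from $f_2$ alone via the clique-finding machinery of Theorem~\ref{thm:clique-lb} applied to an auxiliary dependent random graph. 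This cover is identical for all $j$, so $\plr_1$ sends it just once, at cost $O(n(\log\log n)/\log n)$; the $g_j$-dependent augmentations of each player's message then sum over the $\log n$ bits to a further $O(n(\log\log n)^2/\log n)$, matching the claimed upper bound.

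The main obstacle is to verify that the $\mpj_3$ protocol of Theorem~\ref{thm:mpj-ub} really does decompose cleanly into a shared $f_2$-only cover of size $O(n(\log\log n)/\log n)$ and per-instance $g_j$-dependent messages whose total contribution across the $\log n$ bit-instances is $O(n(\log\log n)^2/\log n)$. This requires opening up the proof of Theorem~\ref{thm:mpj-ub} and tracking, message by message, which bits depend only on $f_2$ (and can be reused) and which truly depend on the current $g_j$. The reduction itself and the final accounting are otherwise routine.
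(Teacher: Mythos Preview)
Your reduction via $g = f_4\circ f_3$ is valid and cleanly reduces $\mpjh_4$ to an instance of $\mpjh_3$ on $(i,f_2,g)$ played by $\plr_1,\plr_2,\plr_3$. But the ``infrastructure sharing'' idea does not save what you need it to. In the $\mpj_3$ protocol $\cP_H$, the clique cover of $G_{f_2,H}$ is \emph{never communicated}: both Alice and Carol compute it locally from $f_2$ and the fixed $H$. What Alice actually sends is one \xor\ bit per clique, and every such bit depends on the last input $x$ (here $g_j$); likewise every bit Bob sends depends on $x$. So the per-instance cost that is genuinely $g_j$-dependent is the \emph{entire} $O(n(\log\log n)/\log n)$, and running $\log n$ bit-instances costs $O(n\log\log n)$, not $O(n(\log\log n)^2/\log n)$. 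There is nothing of nontrivial size that depends only on $f_2$ and can be amortized across the $\log n$ bits.

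The paper's proof takes a different route that avoids computing all $\log n$ bits. It decomposes $f_3$ (not $g$) bitwise, runs only $k=\Theta(\log\log n)$ instances of the $\mpj_3$ protocol on $(i,f_2,z_j)$ so that $\plr_3$ learns the top $k$ bits of $f_3(f_2(i))$; then $\plr_3$, who sees $f_4$, sends $f_4(z)$ for the $n/2^k$ values $z$ consistent with those $k$ bits, and $\plr_4$ (who sees $f_3$) picks out the correct one. The cost is $k\cdot O(n(\log\log n)/\log n) + (n/2^k)\log n$, and balancing at $k=\Theta(\log\log n)$ gives $O(n(\log\log n)^2/\log n)$. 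The key idea you are missing is this trade-off: learn only a prefix of the intermediate pointer and pay for the residual uncertainty with a short list, rather than trying to recover all $\log n$ output bits through the $\mpj_3$ machinery.
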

Our protocol for $\mpjh_4$ is the first sublinear-cost protocol for
$\mpjh_k$ for any value of $k$ and improves on the protocol of Damm et
al.~\cite{DammJS98} which has stood for nearly twenty years.
Our last pointer jumping results give upper bounds in the SM setting.
First we show how to convert our protocol from
Theorem~\ref{thm:mpj-ub} to a simultaneous messages protocol.
\begin{lemma}\label{lem:mpj-sm-ub}
  $\smcc(\mpj_3) = O\left(n \frac{\log \log n}{\log n}\right)$.
\end{lemma}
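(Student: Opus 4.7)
The plan is to adapt the one-way protocol of Theorem~\ref{thm:mpj-ub} directly to the simultaneous-message model. The two things to establish are: (i) Bob's message in the one-way protocol depends only on his own input $(i,\mathbf{x})$ rather than on Alice's message, so that Alice and Bob can speak simultaneously; and (ii) Carol's final decoding in the one-way protocol can be summarized in $O(\log n)$ bits that she herself transmits to the referee.

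For (i), I would inspect the protocol used to prove Theorem~\ref{thm:mpj-ub}. Its intended structure is that Alice, knowing $f$ and $\mathbf{x}$, compresses the induced string $y_j=x_{f(j)}$ and sends $O(n\log\log n/\log n)$ bits to describe it, while Bob's role is to resolve the ambiguity in that compressed description at index~$i$ using his independent knowledge of $(i,\mathbf{x})$. I expect that Bob's message is naturally phrased as a function of $(i,\mathbf{x})$ alone (for instance, a short fingerprint of $\mathbf{x}$ restricted to a neighborhood determined by $i$), so that nothing about Alice's message is used in generating it. Verifying this is the main technical check of the proof.

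For (ii), in the one-way protocol Carol uses her inputs $(i,f)$ together with $M_A,M_B$ to read off the bit $x_{f(i)}$; her use of $f$ is only to compute the single index $f(i)$. I therefore have Carol send the $\log n$-bit message $M_C=f(i)$ directly. In the SM protocol, Alice sends $M_A(f,\mathbf{x})$, Bob sends $M_B(i,\mathbf{x})$, and Carol sends $M_C=f(i)$, all simultaneously; the referee simulates Carol's decoding from Theorem~\ref{thm:mpj-ub} using $M_A$, $M_B$ together with $M_C$ in place of the input $f(i)$ that Carol would have computed for herself. The total communication is $O(n\log\log n/\log n)+O(\log n)=O(n\log\log n/\log n)$.

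The main obstacle is the structural claim in (i): if Bob's one-way message were genuinely adaptive to $M_A$, a brute-force workaround (sending Bob's response to every possible $M_A$) would blow up the cost exponentially, so some care is needed in writing the protocol so that Bob's contribution factors through $(i,\mathbf{x})$. Given the form of the dependent-random-graph-based construction underlying Theorem~\ref{thm:mpj-ub}, I expect this to follow immediately from how Bob's hint is defined, making the conversion essentially free.
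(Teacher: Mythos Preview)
Your point (i) is correct: in the one-way protocol $\cP_H$ underlying Theorem~\ref{thm:mpj-ub}, Bob's message consists of the bits $x[j]$ for all $j$ with $(i,j)\in H$, and this depends only on $(i,x)$, not on Alice's message. So Alice and Bob can indeed speak simultaneously without change.

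Point (ii), however, has a genuine gap. Your claim that ``her use of $f$ is only to compute the single index $f(i)$'' is false for this protocol. To decode, Carol must construct the graph $G_{f,H}$ (which requires \emph{all} of $f$), compute its clique cover, locate the clique $C$ containing $i$, and for every $j\in C\setminus\{i\}$ identify $f(j)$ in order to know which of Bob's bits to cancel. None of this is recoverable from $f(i)$ alone: the referee, seeing neither $i$ nor $f$, cannot rebuild the clique cover, cannot tell which bit of Alice's message corresponds to the clique containing $i$, and cannot tell which of Bob's bits to combine with it. Sending $M_C=f(i)$ is therefore insufficient.

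The paper's fix is simple once you see the issue: Carol sends a \emph{bitmask}, one bit for each bit in Alice's and Bob's messages, marking exactly which positions the referee should \xor\ together. This bitmask has length equal to $|M_A|+|M_B|=O(n\log\log n/\log n)$, so the total cost merely doubles. Your overall architecture (Alice and Bob replay their one-way messages; Carol summarizes her decoding rule for the referee) is the right one; you only underestimated how much Carol must transmit.
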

Note that to solve $\mpjh_3$, players can compute each bit of
$f_3(f_2(i))$ using an $\mpj_3$ protocol.  By running $\log n$
instances in parallel, players compute all of
$\mpjh_3(i,f_2,f_3)$.  Thus, we get the following 
bound for $\mpjh_3$.
\begin{corollary}\label{lem:mpjh-sm-ub}
  $\smcc(\mpjh_3) = O(n \log \log n)$.
\end{corollary}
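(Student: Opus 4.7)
The plan is to reduce $\mpjh_3$ to $\log n$ parallel instances of Boolean $\mpj_3$ and then apply Lemma~\ref{lem:mpj-sm-ub} to each instance. The non-Boolean output $\mpjh_3(i,f_2,f_3) = f_3(f_2(i)) \in [n]$ can be written as a $\log n$-bit string, so it suffices to recover each bit separately.

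Concretely, for each bit position $b \in [\log n]$, define the Boolean instance of $\mpj_3$ with inputs $(i, f_2, \mathbf{x}^{(b)})$, where $\mathbf{x}^{(b)} \in \{0,1\}^n$ is given by $\mathbf{x}^{(b)}_v = $ the $b$-th bit of $f_3(v)$. Then the $b$-th bit of $\mpjh_3(i,f_2,f_3)$ is exactly $\mpj_3(i, f_2, \mathbf{x}^{(b)})$. The key observation is that each $\mpjh_3$ player has enough information on the other two players' foreheads to simulate their role in each of the $\log n$ Boolean instances simultaneously: Alice (who sees $f_2$ and $f_3$) can compute $\mathbf{x}^{(b)}$ from $f_3$ for every $b$; similarly for Bob (who sees $i$ and $f_3$); and Carol's view $(i, f_2)$ is identical across all $\log n$ Boolean copies. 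Since the SM protocol of Lemma~\ref{lem:mpj-sm-ub} requires no interaction between players, the $\log n$ instances can be executed in parallel without any coordination issue.

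The referee receives, from each player, the concatenation of their $\log n$ messages. For each bit $b$, the referee applies the Boolean $\mpj_3$ decoder to the $b$-th triple of messages to recover the $b$-th bit of $f_3(f_2(i))$, then outputs the resulting $\log n$-bit string. The total communication is $\log n$ times the cost of a single Boolean SM instance, giving
\[
\smcc(\mpjh_3) \;\leq\; \log n \cdot \smcc(\mpj_3) \;=\; \log n \cdot O\!\left(\frac{n \log \log n}{\log n}\right) \;=\; O(n \log \log n),
\]
as claimed. There is no real obstacle here: correctness of the reduction is immediate from the bitwise decomposition, and the bound follows directly from Lemma~\ref{lem:mpj-sm-ub}. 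The only point that deserves a brief remark is verifying that parallel composition is sound in the SM model, which holds because each player's message is a function only of what they see, independent of the other messages.
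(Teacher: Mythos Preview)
Your proof is correct and matches the paper's approach exactly: decompose $f_3$ bitwise into strings $z_1,\ldots,z_{\log n}$, run the SM protocol for $\mpj_3$ from Lemma~\ref{lem:mpj-sm-ub} in parallel on each $(i,f_2,z_j)$, and have the referee assemble the bits of $f_3(f_2(i))$. Your additional remarks checking that each player can form the required inputs and that parallel composition is sound in the SM model are more explicit than the paper's treatment, but the argument is the same.
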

This matches the bound from~\cite{DammJS98} but holds in the more
restrictive SM setting.

\subsection{Obtaining Bounds for Dependent Random Graph Properties}
In this subsection, we describe the technical hook we obtained to
prove our bounds for Theorems~\ref{thm:clique-lb}
and~\ref{thm:chromatic-ub}.  A key piece of intuition is that when
looking at only small subgraphs of $G \sim \gnpd$, the subgraph
usually \emph{looks} like $\gnp$.  This intution is formalized in the
following definition and lemma.
\begin{definition}\label{def:independent}
  Given a dependent random graph $\ldgraph{n}{p}{d}$, call a subset of vertices
  $S \subseteq V$ \emph{\independent} if any two edges in the subgraph
  induced by $S$ are independent.
\end{definition}

\begin{lemma}\label{lem:independent}
  Suppose $d$ and $k$ are integers such that $dk^3 \leq n$.
  Fix any $d$-dependent graph $\ldgraph{n}{p}{d}$, and let
  $S$ be a set of $k$ vertices uniformly chosen from $V$.  Then, we have
  $$\Pr[S \mbox{ is \independent}] \geq 1-\frac{3dk^3}{2n}\ .$$
\end{lemma}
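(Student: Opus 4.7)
The plan is a union bound over pairs of potential edges in $K_n$ that are mutually dependent. By Definition~\ref{def:independent}, $S$ fails to be uncorrelated precisely when there exist two distinct pairs $e,e' \in \binom{S}{2}$ such that the corresponding potential edges of $\gnpd$ are not independent as random variables. For each $e \in \binom{V}{2}$, let $D_e$ denote the set of other potential edges on which $e$ depends; the hypothesis gives $|D_e| \leq d$, and since independence is a symmetric relation we have $e' \in D_e$ iff $e \in D_{e'}$. Consequently, the number of unordered pairs $\{e,e'\}$ of mutually dependent potential edges in $K_n$ is at most $\tfrac{1}{2}\binom{n}{2}d \leq n^2 d/4$.

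Next I would apply a union bound. For any two distinct potential edges $e,e'$, the set $V(e)\cup V(e')$ contains either $3$ vertices (if $e,e'$ share an endpoint) or $4$ vertices (if they are vertex-disjoint). In either case, $\Pr[V(e)\cup V(e') \subseteq S]$ is bounded above by the probability that three fixed vertices all lie in the uniformly chosen $k$-subset $S$, namely $\binom{n-3}{k-3}/\binom{n}{k} = k(k-1)(k-2)/\bigl(n(n-1)(n-2)\bigr)$. Combining the count of bad pairs with this per-pair bound gives
\[
  \Pr[S \text{ not \independent}] \;\leq\; \frac{n^2 d}{4} \cdot \frac{k(k-1)(k-2)}{n(n-1)(n-2)}.
\]

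Finally, I would invoke the hypothesis $dk^3 \leq n$---which forces $k$ to be much smaller than $n$---and the elementary estimates $k(k-1)(k-2) \leq k^3$ and $n(n-1)(n-2) \geq n^3/2$ (valid for $n \geq 6$; small $n$ is vacuous since then $k \leq 2$ and $S$ is trivially uncorrelated) to simplify the right-hand side to at most $dk^3/(2n)$, which is comfortably within the claimed $3dk^3/(2n)$. I do not expect any real obstacle: the argument is just a careful accounting exercise. The only mildly delicate point is recognizing that the shared-endpoint case has a larger inclusion probability than the vertex-disjoint case, so one can dispense with a case split by bounding both types of dependent pairs by the same $3$-vertex inclusion probability.
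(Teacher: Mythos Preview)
Your argument is correct and in fact yields the sharper bound $dk^3/(2n)$, comfortably inside the stated $3dk^3/(2n)$. The route, however, differs from the paper's. The paper separates dependent edge pairs into \emph{local} (sharing a vertex) and \emph{remote} (vertex-disjoint) cases and bounds each by a different mechanism: remote dependencies are handled by exposing vertices of $S$ one at a time and tracking a growing set of ``forbidden'' vertices, while local dependencies are handled by a direct count over pairs $(i,j)$ and their bad third vertices. Summing the two cases yields $dk^3/n + dk^3/(2n) = 3dk^3/(2n)$.

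You collapse this case split entirely by observing that any two distinct edges together span at least three vertices, so both cases are dominated by the three-vertex inclusion probability $\binom{n-3}{k-3}/\binom{n}{k}$. Coupled with the global handshake bound of at most $\tfrac{1}{2}\binom{n}{2}d$ dependent pairs in the dependency graph, a single union bound does the job. Your approach is shorter, more transparent, and loses nothing---indeed it gains a factor of $3$. The paper's decomposition does carry some structural intuition (local versus remote correlations behave differently in later variance calculations), but for this lemma your uniform treatment is cleaner.
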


At first glance, it might appear like we are now able to appeal to the
existing arguments for obtaining bounds for $\clique(\gnp)$ and then
$\chromatic(\gnp)$.  Unfortuantely, this is not the case---while most
potential $k$-cliques are \independent, allowing correlation between
edges drives up the \emph{variance}.  In effect, we might expect to
have roughly the same number of $k$-cliques, but these cliques bunch
together.  Nevertheless, we are able to show that when $d$ is small
enough, these cliques don't bunch up too much.  Appropriately bounding
the variance is the most technically involved hurdle in this work,
and is necessary to obtain both the upper bound on the chromatic
number, and the effecient pointer jumping protocol.  
\subsection{Roadmap}
The rest of the paper is organized as follows.  In
Section~\ref{sec:prelim} we specify some notation, give formal
definitions for the problems and models we consider, and provide some
technical lemmas on probability which we'll need in later sections.
We develop our results for dependent random graphs in
Section~\ref{sec:graphs}, deferring some technical lemmas to
Section~\ref{sec:maintechnical}.  We present main result on
Multiparty Pointer Jumping in Section~\ref{sec:mpj}, deferring the
secondary $\mpj_k$ results to Section~\ref{sec:nonbool}. In
Section~\ref{sec:otherstuff} we prove Lemmas~\ref{lem:large-clique}
and~\ref{lem:larged}.

\section{Preliminaries and Notation} \label{sec:prelim} 

We use $[n]$ to denote the set $\{1,\ldots, n\}$, $N$ to denote ${n
  \choose 2}$, and $\exp(z)$ to denote $e^z$.  For a string $x \in
\{0,1\}^n$, let $x[j]$ denote the $j$th bit of $x$.  For a sequence
of random variables $X_0, X_1, \ldots$, we use $\xv_i$ to denote the
subsequence $X_0,\ldots, X_i$.  For a graph $G = (V,E)$,
$\bar{G}$ denotes the complement of $G$.  Given sets $A \subset B
\subset V$, we use $\SnotX{B}{A}$ to denote the set of edges $\{(u,v):
u,v\in B \mbox{ and } \{u,v\} \nsubseteq A\}$.

For a communication problem, we refer to players
as $\plr_1, \ldots, \plr_k$.  When $k=3$, we anthropomorphize players
as Alice, Bob, and Carol.  Our communication complexity measures were
defined in Section~\ref{sec:intro}; for an in-depth development of
communication complexity, consult the excellent standard textbook of
Kushilevitz and Nisan~\cite{KushilevitzNisan-book}.

\subsection{Probability Theory and Random Graphs}
Next, we formalize our notion of dependent random graphs
and describe the tools we use to bound $\clique(\gnpd)$.
\begin{definition}[\cite{DubhashiPanconesi-book}, Definition 5.3]
  A sequence of random variables $Y_0, Y_1, \ldots, Y_n$ is a
  \emph{martingale} with respect to another sequence $X_0, X_1,
  \ldots, X_n$ if for all $i \geq 0$ we have $$Y_i = g_i(\xv_i)$$ for
  some functions $\{g_i\}$ and, for all $i \geq 1$ we have $$E[Y_i |
  \xv_{i-1}] = Y_{i-1}\ .$$
\end{definition}

\begin{theorem}[Azuma's Inequality]
  Let $Y_0, \ldots, Y_n$ be a martingale with respect to $X_0,\ldots,
  X_n$ such that $a_i \leq Y_i - Y_{i-1} \leq b_i$ for all $i\geq 1$.
  Then $$\Pr[Y_n > Y_0 + t], \Pr[Y_n < Y_0 - t] \leq
  \exp\left(-\frac{2t^2}{\sum_i (b_i-a_i)^2}\right)\ .$$
\end{theorem}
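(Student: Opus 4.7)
The plan is to apply the standard exponential-moment (Chernoff-style) argument to the martingale $Y_n - Y_0$, reducing the tail bound to controlling the conditional moment-generating function of a single bounded mean-zero increment; that latter control is exactly Hoeffding's lemma, which I would prove as a separate sub-lemma.

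First I would decompose $Y_n - Y_0 = \sum_{i=1}^n Z_i$ where $Z_i := Y_i - Y_{i-1}$. The martingale property gives $E[Z_i \mid \xv_{i-1}] = 0$, and by hypothesis $a_i \leq Z_i \leq b_i$. For any $\lambda > 0$, Markov's inequality applied to $e^{\lambda(Y_n-Y_0)}$ yields $\Pr[Y_n > Y_0 + t] \leq e^{-\lambda t} \cdot E[e^{\lambda \sum_i Z_i}]$.

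Next I would peel off factors one at a time using the tower property: conditioning on $\xv_{n-1}$, the quantity $e^{\lambda \sum_{i<n} Z_i}$ is measurable, so $E[e^{\lambda \sum_{i=1}^n Z_i}] = E\bigl[e^{\lambda \sum_{i=1}^{n-1} Z_i}\, E[e^{\lambda Z_n} \mid \xv_{n-1}]\bigr]$. The key input is Hoeffding's lemma: for any random variable $X$ with $E[X]=0$ and $a \leq X \leq b$, one has $E[e^{\lambda X}] \leq \exp(\lambda^2(b-a)^2/8)$. I would prove this by using convexity of $x \mapsto e^{\lambda x}$ to dominate $e^{\lambda X}$ by the chord joining $(a, e^{\lambda a})$ and $(b, e^{\lambda b})$, taking expectations (using $E[X]=0$), and then doing a short second-order Taylor argument on the resulting log-MGF $\phi(u) = -pu + \ln((1-p) + p e^u)$ with $p = -a/(b-a)$ and $u = \lambda(b-a)$, using $\phi(0)=\phi'(0)=0$ and $\phi''(u) \leq 1/4$. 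Applying Hoeffding's lemma to the conditional law of $Z_n$ gives $E[e^{\lambda Z_n} \mid \xv_{n-1}] \leq \exp(\lambda^2(b_n-a_n)^2/8)$, and an easy induction unwinding the tower peel-off yields $E[e^{\lambda(Y_n-Y_0)}] \leq \exp\bigl(\tfrac{\lambda^2}{8} \sum_i (b_i-a_i)^2\bigr)$.

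Finally I would optimize the resulting bound $\Pr[Y_n > Y_0 + t] \leq \exp(-\lambda t + \lambda^2 S/8)$, where $S := \sum_i (b_i-a_i)^2$, by setting $\lambda = 4t/S$ to obtain the claimed exponent $-2t^2/S$. The lower-tail bound $\Pr[Y_n < Y_0 - t]$ follows by applying the same argument to the martingale $-Y_n$, whose increments lie in $[-b_i, -a_i]$ so that the range widths $b_i - a_i$ are unchanged. The only genuinely delicate step is Hoeffding's lemma; everything else is bookkeeping around the Chernoff method, linearity of conditional expectation, and the tower property.
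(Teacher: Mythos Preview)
Your proposal is the standard textbook proof of Azuma's inequality (Chernoff method plus Hoeffding's lemma), and it is correct. The paper does not prove this statement at all: it is quoted in the preliminaries as a known result (with a reference to Dubhashi--Panconesi), so there is no proof in the paper to compare against.
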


Of particular relevance for our work is the \emph{edge-exposure
  martingale}.  Let $G$ be a random graph.  Arbitrarily order possible
edges of the graph $e_1, \ldots, e_N$, and let $X_i$ be the indicator
variable for the event that $e_i \in G$.  Let $f:{n \choose 2}
\rightarrow \RR$ be any function on the edge set, and let $Y_i \deq
E[f(X_1,\ldots, X_N) | \xv_i]$.  It is easy to verify that for any
$f$, $E[Y_i | \xv_{<i}] = Y_{i-1}$, and therefore $\{Y_i\}$ are a
matingale with respect to $\{X_i\}$.  We say that $\{Y_i\}$ is the
edge-exposure martingale for $G$.

It is worth noting that martingales make no assumptions about the
independence of $\{X_i\}$.  We'll use martingales on graph
distributions where each edges may depend on a small number of other
edges.  This notion of \emph{local dependency} is formalized below.

A \emph{dependency graph} for a set of random variables $X =
\{X_1,\ldots, X_N\}$ is a graph $H$ on $[N]$ such that for all $i$,
$X_i$ is independent of $\{X_j: (i,j) \not\in H\}$.  We say that a set
of variables $X$ is {\bf $d$-locally dependent} if there exists a dependency
graph for $X$ where each vertex has degree at most $d$.

The following lemma of Janson~\cite{Janson04} (rephrased in our
notation) bounds the probability that the sum of a series of random
bits deviates far from its expected value, when the random bits have
limited dependence.
\begin{lemma}{\cite{Janson04}}\label{lem:dependent-sum}
  Let $X = \{X_i\}_{i \in [N]}$ be a $d$-locally dependent set of
  identically distributed binary variables, and let $Y = \sum_{i \in
    [N]} X_i$.  Then, for any $t$ we have
 $$\Pr[|Y-\E[Y]| \geq t] \leq e^{\frac{-2t^2}{(d+1)N}}\ .$$
\end{lemma}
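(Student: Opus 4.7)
The plan is to reduce to the independent case by decomposing the index set $[N]$ along the dependency graph, and then combining per-class Hoeffding MGF bounds using Hölder's inequality. Concretely, let $H$ be a dependency graph of $X$ with maximum degree at most $d$. A greedy coloring gives a proper coloring of $H$ with $\chi \leq d+1$ colors, i.e.\ a partition $[N] = I_1 \sqcup \cdots \sqcup I_\chi$ such that within each color class $I_j$ the variables $\{X_i : i \in I_j\}$ are mutually independent. Write $Y_j := \sum_{i \in I_j} X_i$, so $Y = \sum_j Y_j$ and $\E Y = \sum_j \E Y_j$.

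Next I apply the standard exponential-Markov device: for $\lambda > 0$,
\[
\Pr[Y - \E Y \geq t] \leq e^{-\lambda t}\, \E\!\left[\prod_{j=1}^\chi e^{\lambda (Y_j - \E Y_j)}\right].
\]
The $Y_j$'s are not independent across $j$, so to bound the product's expectation I would use the generalized Hölder inequality:
\[
\E\!\left[\prod_{j=1}^\chi e^{\lambda(Y_j - \E Y_j)}\right] \leq \prod_{j=1}^\chi \Bigl(\E\!\left[e^{\chi \lambda (Y_j - \E Y_j)}\right]\Bigr)^{1/\chi}.
\]
Within each $I_j$ the summands are independent and $[0,1]$-valued, so Hoeffding's MGF bound gives $\E[e^{\chi\lambda (Y_j-\E Y_j)}] \leq \exp(\chi^2 \lambda^2 |I_j|/8)$. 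Multiplying the $1/\chi$-th powers over $j$ and using $\sum_j |I_j| = N$ yields $\E[e^{\lambda(Y-\E Y)}] \leq \exp(\chi \lambda^2 N / 8)$, hence
\[
\Pr[Y - \E Y \geq t] \leq \exp\!\left(\tfrac{\chi \lambda^2 N}{8} - \lambda t\right).
\]
Optimizing with $\lambda = 4t/(\chi N)$ gives $\Pr[Y - \E Y \geq t] \leq \exp(-2t^2/(\chi N)) \leq \exp(-2t^2/((d+1)N))$. The lower tail is handled identically by working with $-X_i$ (which is still $d$-locally dependent), and combining the two tails yields the stated two-sided bound.

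The only real content of the argument is the Hölder step that lets us trade dependence across color classes for a factor of $\chi$ in the exponent; the rest is a routine Chernoff-Hoeffding computation. The main obstacle, and the reason one cannot simply invoke a black-box Chernoff bound, is that the $X_i$'s are not globally independent and the $Y_j$'s are not independent either, so some inequality like Hölder (or an equivalent Jensen/AM-GM argument on the MGF) is needed to decouple the color classes at the cost of replacing $\lambda$ by $\chi\lambda$ inside each factor. Sharper versions of the lemma replace $d+1$ by the fractional chromatic number $\chi^*(H)$, via a fractional cover of $[N]$ by independent sets, but the crude bound $\chi \leq d+1$ from greedy coloring suffices for the statement as given.
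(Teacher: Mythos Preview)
The paper does not prove this lemma at all; it is simply quoted from Janson~\cite{Janson04}. Your argument is correct and is in fact precisely Janson's proof: color the dependency graph with at most $d+1$ colors, decouple the color classes via the generalized H\"older inequality, apply Hoeffding's MGF bound inside each independent class, and optimize $\lambda$. Two minor remarks. First, the step ``within each color class the variables are mutually independent'' deserves one sentence of justification, since the definition of a dependency graph only says each $X_i$ is independent of the \emph{collection} of its non-neighbors; a short induction (peel off one variable at a time) shows this implies full mutual independence on any independent set of $H$. Second, combining the two one-sided tails literally gives $2\exp(-2t^2/((d+1)N))$ rather than the constant $1$ stated in the lemma; the discrepancy is in the paper's transcription of Janson's bound, not in your argument.
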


For more details and results on probability and concentration of
measure, consult the textbook of Dubhashi and
Panconesi~\cite{DubhashiPanconesi-book}.

\begin{definition}
  A distribution $\ldgraph{n}{p}{d}$ is \emph{$d$-dependent} if each
  edge is placed in the graph with probability $p$, and furthermore
  that the set of edges are $d$-locally dependent.
\end{definition}
Note that taking $d=0$ gives the standard \erdosrenyi graph model.  As
with $k$-wise independent random graphs, $d$-dependent random graphs
are actually a family of graph distributions.  We make no assumptions
on the underlying distribution beyond the fact that each edge depends
on at most $d$ other edges.  We use $\ldgraph{n}{p}{d}$ to denote an
arbitrary \ldgraphname.

A clique in a graph $G = (V,E)$ is a set of vertices $S$ such that the
subgraph induced on $S$ is complete.  Similarly, an independent set
$T$ is a set of vertices whose induced subgraph is empty.  A
\emph{clique cover} of $G$ is a partition of $V$ into
cliques. We let $\clique(G)$ denote the size of the largest clique in
$G$. Let $\chromatic(G)$ denote the chromatic number of $G$; i.e., the
minimum number of colors needed to color the vertex set such that no
two adjacent vertices are colored the same.  Note that $\chromatic(G)$
is the size of the smallest clique cover of $\bar{G}$.

\subsection{Multiparty Pointer Jumping}
Finally, we formally define the Boolean Multiparty Pointer
Jumping function.  Let $i \in [n]$, and let $f_2,\ldots, f_k : [n]^n$,
be functions from $[n]$ to $[n]$.  Let $x \in \{0,1\}^n$.  We define
the $k$-player pointer jumping function $\mpj_k^n : [n] \times \left(
[n]^n\right)^{k-2} \times \{0,1\}^n$ recursively as follows:
\begin{align*}
  \mpj_3^n(i,f_2,x) &\deq x[f_2(i)] \ ,\\ \mpj_k^n(i,f_2,\ldots,
  f_{k-1},x) &\deq \mpj_{k-1}^n(f_2(i), f_3, \ldots, f_{k-1},x)\ .
\end{align*}
The non-Boolean version $\mpjh_k^n : [n] \times
\left([n]^n\right)^{k-1}$ is defined similarly recursively:
\begin{align*}
  \mpjh_3^n(i, f_2, f_3) &\deq f_3(f_2(i)) \ ,\\ \mpjh_k^n(i, f_2,
  \ldots, f_k) &\deq \mpjh_{k-1}^n(f_2(i), f_3, \ldots, f_k)\ .
\end{align*}
Henceforth, we drop the superscript $n$ to ease notation.  Each
problem is turned into a communication game in the natural way.
$\plr_1$ is given $i$; for each $2\leq j < k$, $\plr_j$ receives
$f_j$, and $\plr_k$ receives $x$.  Players must communicate to output
$\mpj_k(i,f_2,\ldots, f_{k-1},x)$.

\section{Dependent Random Graphs} \label{sec:graphs} 
In this section, we prove our main results regarding dependent random
graphs, namely that with high probability they contain a large clique,
and with high probability the chromatic number is not too large.
The two theorems are formally stated below.
\begin{theorem}{(Formal Restatement of Theorem~\ref{thm:clique-lb})}\label{thm:clique}
  For all $0<\eps < 1/4$ there exists $n_0$ such
  that $$\Pr[\clique(\gnpd) > k] > 1-\exp(-n^{1+\eps})\ ,$$ for all
  $n\geq n_0$, for all $n^{-\eps/4} < p < \frac{1}{4}$ and for all
  $d,k$ such that $k \leq \frac{\log(n/(2d\log^3 n))}{\log(1/p)}$ and
  $d/p \leq n^{1/2-\eps}$.
\end{theorem}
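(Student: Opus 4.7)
The plan is to apply the second moment method to $X$, the count of $k$-cliques in $\gnpd$, and then strengthen the resulting Chebyshev-style bound to the claimed $\exp(-n^{1+\eps})$ failure probability via a martingale concentration argument. Since each edge has marginal probability $p$ regardless of how edges depend on one another, the first moment is immediate: $\E[X] = \binom{n}{k} p^{\binom{k}{2}}$. The hypothesis $k \leq \log(n/(2d\log^3 n))/\log(1/p)$ is equivalent to $p^k \geq 2d\log^3 n / n$, and a routine calculation then shows $\E[X] \geq n^{1+\eps'}$ for some $\eps' > 0$ throughout the admissible ranges of $p$ and $d$.

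The key work is the variance bound. I would write $\E[X^2] = \sum_{S,T} \Pr[S \text{ and } T \text{ are cliques}]$ and split the pairs $(S,T)$ both by intersection size $i = |S \cap T|$ and, more importantly, by whether $S \cup T$ is \independent in the sense of Definition~\ref{def:independent}. For pairs with $S \cup T$ \independent, the edges spanned by $S \cup T$ are mutually independent, so the joint probability is exactly the $\gnp$ value $p^{2\binom{k}{2}-\binom{i}{2}}$, and summing over these pairs reproduces (up to lower order) the usual $\gnp$ second-moment bound. For pairs with $S \cup T$ not \independent, one has only $\Pr[S, T \text{ cliques}] \leq p^{\text{(\# independent edges)}}$, but the number of such pairs is controlled by an extension of Lemma~\ref{lem:independent} to sets of size at most $2k$, introducing a factor of order $d(2k)^3/n$. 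The hypothesis $d/p \leq n^{1/2-\eps}$ enters precisely here to render the correlated contribution subdominant, giving $\mathrm{Var}(X) = o(\E[X]^2)$ and hence $\Pr[X = 0] = o(1)$ by Chebyshev.

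To strengthen $1-o(1)$ to $1-\exp(-n^{1+\eps})$, I would employ the edge-exposure martingale $Y_i = \E[X \mid \xv_i]$ with Azuma's inequality; alternatively, one could adapt Janson's inequality to the locally dependent edge setting, in the spirit of Lemma~\ref{lem:dependent-sum}. The variance calculation above supplies the control on the squared-difference sum needed for Azuma, and combined with $\E[X] \geq n^{1+\eps}$ this yields $\Pr[X=0] \leq \Pr[|X - \E[X]| > \E[X]/2] \leq \exp(-\Omega(\E[X])) \leq \exp(-n^{1+\eps})$.

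The main obstacle is the variance analysis under edge dependencies. In $\gnp$, vertex-disjoint $k$-subsets are stochastically independent, making the second moment clean; in $\gnpd$, even vertex-disjoint pairs can be coupled through the dependency graph. Correctly counting the pairs $(S,T)$ for which $S \cup T$ is not \independent, and arguing that their contribution to $\E[X^2]$ is subdominant for every $k$ up to the stated threshold, is the technical heart of the argument and is exactly where the condition $d/p \leq n^{1/2-\eps}$ is needed. This calculation I would defer to Section~\ref{sec:maintechnical}.
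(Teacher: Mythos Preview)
Your plan has a genuine gap, in two places.

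First, a small but real error: you write $\E[X] = \binom{n}{k} p^{\binom{k}{2}}$, arguing that ``each edge has marginal probability $p$.'' Marginals do not determine $\Pr[S \text{ is a clique}]$; that probability equals $p^{\binom{k}{2}}$ only when the $\binom{k}{2}$ edges inside $S$ are mutually independent. In a $d$-dependent graph this can fail, so the first moment is not immediate. (It \emph{is} repairable via Lemma~\ref{lem:independent}, restricting to \independent $k$-sets; the paper does exactly this.)

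Second, and more seriously, your concentration step does not work as written. For the edge-exposure martingale of the total clique count $X$, revealing a single edge can shift $\E[X\mid \xv_i]$ by as much as $\binom{n-2}{k-2}$, since every $k$-set through that edge is affected. With Lipschitz constant of that order, Azuma's bound $\exp\bigl(-\E[X]^2 / \sum_i c_i^2\bigr)$ is worthless. A second-moment/variance bound on $X$ does \emph{not} supply the squared-difference sum $\sum_i c_i^2$ that Azuma needs; these are unrelated quantities. Nor does any standard inequality turn ``$\mathrm{Var}(X)=o(\E[X]^2)$'' into ``$\Pr[X=0]\le \exp(-\Omega(\E[X]))$.''

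The paper's key idea, which your plan is missing, is to replace $X$ by $Y$, the maximum number of \emph{edge-disjoint} \independent $k$-cliques. Edge-disjointness is precisely what buys a small Lipschitz constant: exposing one edge touches at most one clique in any edge-disjoint family, and accounting for the $d$ dependent edges gives $|Y_i-Y_{i-1}|\le d$. Azuma then yields $\Pr[Y=0]\le \exp\bigl(-\E[Y]^2/(2Nd^2)\bigr)$, and the hypothesis $d/p\le n^{1/2-\eps}$ is used \emph{here}, in this ratio, not in a variance calculation. The second-moment-flavored work you sketch (controlling intersecting pairs of cliques, with care for dependencies) does appear in the paper, but inside the proof of the lower bound $\E[Y]\ge n^2 p/(19k^5)$ (Lemma~\ref{lem:Y}), via a random-sparsification-and-deletion argument, rather than in the final concentration step.
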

This theorem shows $\clique(\gnpd) = \Omega\left(\frac{\log n}{\log
  1/p}\right)$ with high probability, as long as $d/p$ is bounded away
from $\sqrt{n}$.  Furthermore, when $d = n^{o(1)}$, $\clique(\gnpd)
\geq (1-\eps)\frac{\log n}{\log(1/p)}$ with high probability.
\begin{proof}
  This proof follows the classic technique of
  \bollobas~\cite{Bollobas88}, modified to handle dependent random
  graphs.  We need to show that $\gnpd$ contains clique of size $k$.
  To that end, let $Y$ be the largest number of \emph{edge-disjoint}
  \independent $k$-cliques.  First, we give a lower bound on
  $E[Y]$; we defer its proof to Section~\ref{sec:maintechnical}.
  \begin{lemma}\label{lem:Y}
    $\E[Y] \geq \frac{n^2p}{19k^5}.$
  \end{lemma}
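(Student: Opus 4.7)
The plan is a Caro--Wei / Tur\'an argument combined with a second-moment-style estimate. Let $\mathcal{C}$ denote the set of uncorrelated $k$-cliques in $G$, set $Z = |\mathcal{C}|$, and form the \emph{overlap graph} $H$ with vertex set $\mathcal{C}$ in which two cliques are adjacent iff they share an edge of $G$ (equivalently, at least two vertices). Then $Y = \alpha(H)$, and Caro--Wei together with convexity of $1/x$ gives $Y \geq Z^{2}/(Z+2W)$, where $W = |E(H)|$. The Cauchy--Schwarz inequality $(\E[Z])^{2} \leq \E[Z^{2}/(Z+2W)] \cdot \E[Z+2W]$ then reduces the lemma to proving
\[
\E[Y] \;\geq\; \frac{\E[Z]^{2}}{\E[Z] + 2\,\E[W]} \;\geq\; \frac{n^{2}p}{19\,k^{5}}\, .
\]

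For the lower bound on $\E[Z]$, the key point is that being uncorrelated is a deterministic property of the dependency structure, and whenever $S$ is uncorrelated its $\binom{k}{2}$ edges form an independent set in the dependency graph, hence are \emph{mutually} (not merely pairwise) independent; each appears with probability $p$, so $\Pr[S\text{ is a clique}] = p^{\binom{k}{2}}$ exactly. Combined with Lemma~\ref{lem:independent}, this yields
\[
\E[Z] \;\geq\; \binom{n}{k}\!\left(1 - \frac{3dk^{3}}{2n}\right) p^{\binom{k}{2}}\, .
\]

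For the upper bound on $\E[W]$ I would partition over $j := |S\cap T| \in \{2,\dots,k-1\}$. When the $2k-j$ vertices of $S\cup T$ themselves form an uncorrelated set, the $2\binom{k}{2}-\binom{j}{2}$ edges on $S\cup T$ are mutually independent, so $\Pr[S,T\text{ both cliques}] = p^{2\binom{k}{2}-\binom{j}{2}}$; a second application of Lemma~\ref{lem:independent} to $(2k-j)$-subsets---legal because the theorem's hypotheses force $d(2k)^{3} \ll n$---shows that only an $O(dk^{3}/n)$ fraction of pairs fail this. The atypical pairs are controlled using the cruder bound $\Pr[S,T\text{ both cliques}] \leq p^{\binom{k}{2}}$ coming from $S$ alone being uncorrelated. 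Summing, the $j=2$ term dominates and gives $\E[W] \leq (1+o(1))\binom{n}{k}\binom{k}{2}\binom{n-k}{k-2}\,p^{2\binom{k}{2}-1}$.

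Plugging in, $\E[Z]^{2}/\E[W]$ is of order $\binom{n}{k}p / (\binom{k}{2}\binom{n-k}{k-2}) \approx n^{2}p/k^{4}$, which beats the target $n^{2}p/(19k^{5})$ with a factor of roughly $k$ to spare; the constant $1/19$ absorbs lower-order contributions (the $(1-o(1))$ factors, the smaller-$j$ terms, the atypical pairs, and the $\E[Z]$ in the denominator, which is dominated by $2\E[W]$ in this regime). The main obstacle I anticipate is exactly the bookkeeping for $\E[W]$: in the \erdosrenyi setting every edge event factors freely, but here each factorization must be justified by exhibiting the relevant edges as an independent set in the dependency graph, which is why the split on whether $S\cup T$ is uncorrelated is forced on us. The hypothesis $d/p \leq n^{1/2-\eps}$ is precisely the input that makes both the atypical contribution and the $j \geq 3$ terms genuinely subdominant.
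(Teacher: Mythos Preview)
Your Caro--Wei/Cauchy--Schwarz framework is a perfectly good substitute for the paper's random-subsampling-plus-deletion argument: both routes reduce the lemma to a lower bound on $\E[Z]$ and an upper bound on $\E[W]$, and your lower bound on $\E[Z]$ via Lemma~\ref{lem:independent} is exactly what the paper does. The difference is in how $\E[W]$ is handled, and there your sketch has a genuine gap.

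The problem is the ``atypical'' pairs---those $(S,T)$ with $S$ and $T$ each \independent\ but $S\cup T$ not. You propose to bound their contribution by (number of such pairs)$\times p^{\binom{k}{2}}$, and claim this is lower order. It is not. At level $j=2$ the atypical count is at most $\binom{n}{k}\binom{k}{2}\binom{n-k}{k-2}\cdot O(dk^{3}/n)$, so the atypical contribution there is
\[
O\!\left(\frac{dk^{3}}{n}\right)\binom{n}{k}\binom{k}{2}\binom{n-k}{k-2}\,p^{\binom{k}{2}},
\]
whereas the ``typical'' $j=2$ term you want to dominate is $\binom{n}{k}\binom{k}{2}\binom{n-k}{k-2}\,p^{2\binom{k}{2}-1}$. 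The ratio is $O(dk^{3}/n)\cdot p^{-\binom{k}{2}+1}$. Under the standing hypothesis $k\le \log\bigl(n/(2d\log^{3}n)\bigr)/\log(1/p)$ one has $p^{k}\ge 2d\log^{3}n/n$, hence $p^{-\binom{k}{2}}\ge (n/(2d\log^{3}n))^{(k-1)/2}$, and the ratio blows up super-polynomially once $k\ge 4$. The crude bound $\Pr[S,T\text{ both cliques}]\le p^{\binom{k}{2}}$ throws away all information about $T$, and that loss of a full $p^{\binom{k}{2}}$ factor is fatal. The paper flags precisely this obstacle (see the discussion around the figure with $u,v,X$ and the remark in the proof of Lemma~\ref{lem:Y} that $\Pr[S,T\text{ both cliques}]$ can equal $\Pr[S\text{ clique}]$ even when $S,T$ are each \independent).

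The paper's fix is Lemma~\ref{claim:W-ub2}, which bounds $\E[|W|]$ by conditioning on $S$ being a clique, isolating the set $M$ of vertices incident to an edge that depends on some edge of $\SnotX{S}{X}$, and then stratifying $T$ by $|T\cap M|$. Edges of $T$ avoiding $M$ are genuinely independent of $S$ and contribute a factor $p$; only the (few) edges through $M$ are bounded by $1$. A telescoping estimate (Claim~\ref{claim:fstar-telescope}) shows the $|T\cap M|=0$ stratum dominates, recovering the bound $\E[|W|]\le 2k\binom{n}{k}p^{2\binom{k}{2}-1}\binom{k}{2}\binom{n}{k-2}$. If you plug that bound into your Caro--Wei inequality you do get $\E[Y]\gtrsim n^{2}p/k^{5}$, so your outer framework is fine---but the heart of the lemma is exactly this refined variance calculation, not the reduction to it.
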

  Now, we use the edge-exposure martingale on $\gnpd$ to show that
  with high probability, $Y$ does not stray far from it's expectation.
  %
%
  Let $Y_0,Y_1,\cdots Y_N$, be the edge exposure martingale on
  $\gn{p}{d}$.  Recall that $Y_0 = E[Y], Y_N = Y$, and $Y_i =
  E[Y|\xv_i]$. In a standard random graph model where all edges are
  independently placed in $G$, it is easy to see that conditioning on
  whether or not an edge is in the graph changes the expected number
  of \emph{edge-disjoint} \independent $k$-cliques by at most one.
  This no longer holds when edges are dependent.  However, if the
  graph distribution is $d$-dependent, then conditioning on $X_i$
  changes the expected number of \emph{edge-disjoint} \independent
  $k$-cliques by at most $d$.  Therefore, $\vert Y_{i+1} - Y_i \vert
  \leq d$.
  Then, by Azuma's inequality, Lemma~\ref{lem:Y}, and our assumption
  that $d/p\leq n^{1/2-\eps}$, we have
  \begin{align*}
    \Pr[Y = 0] &\leq \Pr[Y - \mathbb{E}[Y] \leq - \mathbb{E}[Y]]
    \\ &\leq \exp \left ( \frac{-\mathbb{E}[Y]^2}{2Nd^2} \right )
    \\ &= \exp \left(-\frac{n^2p^2}{19^2d^2k^{10}}(1 + o(1)) \right )\\ &
    \leq \exp(-n^{1+ \eps})\ .
  \end{align*}
  Thus, it follows that $\gnpd$ contains an \independent $k$-clique
  with probability at least $1-\exp(-n^{1+\eps})$.  Since every
  \independent clique is still a clique, it is clear
  that $$\Pr[\clique(\gnpd) \geq k] \geq
  1-\exp(-n^{1+\eps})\ .$$
\end{proof}
Next, we use the lower bound on $\clique(\gnpd)$ to obtain an upper
bound on $\chi(\gnpd)$.
\begin{theorem}\label{thm:chromatic}
  For all $0 < \eps < 1/8$ there exists $n_0$ such
  that $$\Pr\left[\chi(\ldgraph{n}{q}{d}) < (1 +
    4\eps)\frac{-n\log(1-q)}{\log n}\right] > 1-\exp(n^{1+\eps})\ ,$$
  for all $3/4 < q < 1-n^{-\eps/4}$, all $d \leq n^{o(1)}$,
  and all $n \geq n_0$.
\end{theorem}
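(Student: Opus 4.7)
The approach is to pass to the complement graph and iteratively extract large independent sets, following the classical \bollobas~strategy but with Theorem~\ref{thm:clique} replacing the standard Erd\H{o}s--R\'enyi clique lower bound. If $G \sim \gn{q}{d}$, then its complement $\bar{G}$ is a $d$-dependent random graph with edge probability $p \deq 1-q$. The hypotheses $3/4 < q < 1-n^{-\eps/4}$ and $d \leq n^{o(1)}$ translate into $n^{-\eps/4} < p < 1/4$ and $d/p = n^{o(1)} \ll \sqrt{n}$, matching the premises of Theorem~\ref{thm:clique} for some $\eps' > \eps$ (for instance $\eps' = 5\eps/(4-\eps)$). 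Applying that theorem to $\bar{G}$ produces an independent set of $G$ of size
$$k \deq \left\lfloor \frac{\log(n/(2d\log^3 n))}{\log(1/(1-q))} \right\rfloor = (1-o(1))\,\frac{-\log n}{\log(1-q)},$$
with failure probability at most $\exp(-n^{1+\eps'})$.

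From here I would extract such independent sets one at a time: set $G_0 \deq G$ and, for $t = 1,2,\ldots$, let $S_t$ be an independent set of size $k$ in $G_{t-1}$, form $G_t \deq G_{t-1} - S_t$, and assign $S_t$ its own fresh color. I would stop as soon as $|V(G_{t-1})| < m \deq n^{1-\eps/4}$, and then give each remaining vertex its own color. This yields a proper coloring of $G$ using at most $\lceil (n-m)/k \rceil + m$ colors. Using $d = n^{o(1)}$ one checks $n/k = (1+o(1))(-n\log(1-q))/\log n$; since $-\log(1-q) \geq \log 4$, the additive term $m = n^{1-\eps/4} = o(n/\log n)$ is negligible, so for $n$ large the total is at most $(1+4\eps)(-n\log(1-q))/\log n$, as desired.

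The main technical obstacle is justifying each application of Theorem~\ref{thm:clique} to the random subgraph $G_{t-1}$. Restricting a $d$-dependent distribution to any fixed vertex subset yields a $d$-dependent distribution with edge marginal $q$, since dependency-graph degrees can only decrease. However, $V(G_{t-1}) = V \setminus \bigcup_{s<t} S_s$ is chosen adaptively, so one must condition on $S_1,\ldots,S_{t-1}$; this conditioning may shift the marginals only of those edges lying in the dependency neighborhood of an extracted non-edge. The number of such affected edges is at most $O(tk^2 d) = n^{o(1)}$, which is negligible against the $\Theta(|V(G_{t-1})|^2) \geq n^{2-\eps/2}$ potential edges of $G_{t-1}$, and one checks that the proof of Theorem~\ref{thm:clique} (which goes via bounding $\E[Y]$ and applying Azuma's inequality) comfortably absorbs such a perturbation. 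Finally, a union bound over the at most $n$ iterations, each failing with probability at most $\exp(-m^{1+\eps'})$ since $|V(G_{t-1})| \geq m$, yields a total failure probability at most $\exp(-n^{1+\eps})$, which matches the claimed bound.
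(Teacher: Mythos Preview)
Your high-level strategy---pass to the complement, invoke Theorem~\ref{thm:clique} to find large independent sets, extract them greedily, and color the leftovers individually---matches the paper's. The difference is in how the randomness is handled, and this is where your argument has a real gap.

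You extract $S_1,S_2,\ldots$ adaptively and then try to condition on them. But conditioning on the \emph{identity} of $S_1$ is not the same as conditioning on the ${k\choose 2}$ non-edges inside $S_1$: whatever rule you use to select $S_1$ (e.g.\ the lexicographically first independent $k$-set) is a function of the entire graph, so the event ``the algorithm chose this particular $S_1$'' constrains edges everywhere, not merely in a small dependency neighborhood. Your estimate of $O(tk^2 d)$ affected edges and the assertion that the proof of Theorem~\ref{thm:clique} ``comfortably absorbs'' the perturbation do not address this. Correspondingly, your final union bound over ``at most $n$ iterations'' implicitly treats the $t$-th remaining graph as a fresh, unconditioned instance of a $d$-dependent random graph on $|V(G_{t-1})|$ vertices, which it is not.

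The paper sidesteps adaptivity altogether. It fixes $m=n/\log^2 n$ and defines a single event $\cE$: \emph{every} $m$-vertex subset of $G$ contains an independent set of size $k\approx(1-\eps')\log m/\log(1/(1-q))$. For each \emph{fixed} subset the induced subgraph is genuinely a $d$-dependent random graph on $m$ vertices, so Theorem~\ref{thm:clique} applies cleanly; a union bound over all $\binom{n}{m}=\exp(O(n\log\log n/\log^2 n))$ subsets is beaten by the failure probability from Theorem~\ref{thm:clique}, yielding $\Pr[\cE]\geq 1-\exp(-n^{1+\eps})$. Once $\cE$ holds, the greedy extraction (pick any $m$ uncolored vertices, pull out a size-$k$ independent set, repeat) is purely deterministic and no further probabilistic reasoning is needed. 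This union-bound-over-all-fixed-subsets is the idea your argument is missing; with it, the conditioning problem simply does not arise.
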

\begin{proof}
  This follows a greedy coloring approach similar
  to~\cite{Bollobas88,PudlakRS97}, but adapted to dependent random
  graphs.  Set $m = \frac{n}{\log^2n}$, $\eps' = 2\eps$, and $p = 1
  -q$.  Let $\cE$ be the event that every induced subgraph $H$ of
  $\gn{q}{d}$ with $m$ vertices has an independent set of size at
  least $k \deq (1-\eps')\frac{\log m}{-\log(1-q)}$.  Independent sets
  in $\gn{q}{d}$ correspond to cliques in the complement graph
  $\overline{\gn{q}{d}}$, which is distributed identically to $\gnpd$.
  Thus, we're able to leverage Theorem~\ref{thm:clique} to bound
  $\Pr[\cE]$.  In particular, since $d \leq n^{o(1)} \leq
  m^{o(1)}$,\footnote{note that $n^\delta = m^{\delta'}$, where
    $\delta' = \delta\frac{\log n}{\log n -2\log\log n}$. If $\delta =
    o(1)$ then $\delta' = o(1)$ as well.} by Theorem~\ref{thm:clique}
  and a union bound we have
  $$\Pr[\cE] > 1 - \binom{n}{m}\exp(-n^{1+\eps'}) >
  1-\exp\left(\frac{n}{\log n} - n^{1+\eps'}\right) > 1 -
  \exp(-n^{1+\eps})\ .$$
  Now, assume $\cE$ holds.  We iteratively construct a coloring for
  $\gn{q}{d}$.  Start with each vertex uncolored.  Repeat the
  following process as long as more than $m$ uncolored vertices
  remain: Select $m$ uncolored vertices.  From their induced subgraph,
  identify an independent set $I$ of size at least $k$.  Then, color
  each vertex in $I$ using a new color.  When at most $m$ uncolored
  vertices remain, color each remaining vertex using a different
  color.  Since two vertices share the same color only if they are in
  an independent set, it's clear this is a valid coloring.  More over,
  for each color in the first phase, we color at least $k >
  (1-\eps')\frac{\log m}{-\log p} > (1-(3/2)\eps)\frac{\log n}{-\log p}$
  vertices.  Hence, the overall number of colors used is at most
  $$\frac{n-m}{(1-(3/2)\eps')(\log n)/(-\log(1-q))} + m \leq
  (1+ 4\eps)\frac{-n\log(1-q)}{\log n}\ .$$
  Therefore, $\chi(\gn{q}{d}) \leq (1 +
  4\eps)\frac{-n\log(1-q)}{\log n}$ as long as $\cE$ holds.  This
  completes the proof.
\end{proof}

Finally, we give an upper bound on $\clique(\gnpd)$ and a lower bound
on $\chi(\gnpd)$, which follow directly from
Lemma~\ref{lem:dependent-sum}.
\begin{theorem}
  For all $0<p<1$ and $d \leq n/\log^2n$, almost surely
  $\clique(\gnpd) = O(d\log n)$.
\end{theorem}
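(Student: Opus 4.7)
The plan is to union-bound over all potential cliques of size $k \deq Cd\log n$ for a suitable constant $C = C(p)$, using Lemma~\ref{lem:dependent-sum} to control the probability that any fixed $k$-set forms a clique. Because $d \leq n/\log^2 n$, we have $k \leq Cn/\log n < n$, so such sets exist, and $\binom{n}{k} \leq n^k$ is the only union bound overhead we need.

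For a fixed vertex set $S$ of size $k$, let $Y_S = \sum_{e \subseteq S} X_e$ denote the number of edges of $\gnpd$ lying inside $S$, where $X_e$ is the indicator that edge $e$ is present. The variables $\{X_e : e \subseteq S\}$ form a $d$-locally dependent family of identically distributed Bernoulli$(p)$ random variables, and there are $M \deq \binom{k}{2}$ of them, with $\E[Y_S] = pM$. The event that $S$ is a clique is exactly $\{Y_S = M\}$, which forces a deviation of $(1-p)M$ from the mean. Applying Lemma~\ref{lem:dependent-sum} yields
\begin{equation*}
  \Pr[S \text{ is a clique}] \;\leq\; \Pr\bigl[|Y_S - \E[Y_S]| \geq (1-p)M\bigr] \;\leq\; \exp\!\left(-\frac{2(1-p)^2 M}{d+1}\right).
\end{equation*}

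A union bound over all $\binom{n}{k}$ candidate sets then gives
\begin{equation*}
  \Pr[\clique(\gnpd) \geq k] \;\leq\; n^k \exp\!\left(-\frac{(1-p)^2 k(k-1)}{d+1}\right) \;=\; \exp\!\left(k\log n - \frac{(1-p)^2 k(k-1)}{d+1}\right).
\end{equation*}
With $k = Cd\log n$, the second term in the exponent dominates the first as soon as $C(1-p)^2 > 1$; choosing e.g.\ $C = 2/(1-p)^2$ makes the exponent tend to $-\infty$, so the probability is $o(1)$.

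The argument is essentially one computation, so there is no real obstacle — the only care needed is checking that the hypothesis $d \leq n/\log^2 n$ suffices to keep $k < n$ and to guarantee that the subtraction $k\log n - (1-p)^2 k(k-1)/(d+1)$ is driven to $-\infty$. Both reduce to the same bookkeeping: $k(k-1)/(d+1) \approx C^2 d \log^2 n$, while $k\log n = Cd\log^2 n$, so the gain is a factor of $C(1-p)^2 - 1 > 0$ in the exponent, independent of how $d$ scales within the allowed range.
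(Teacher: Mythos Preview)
Your proof is correct and follows essentially the same approach as the paper: both fix $k = Cd\log n$, apply Lemma~\ref{lem:dependent-sum} to the $d$-locally dependent family of edge indicators inside a fixed $k$-set to bound $\Pr[S\text{ is a clique}]$, and then union-bound over all $\binom{n}{k}$ sets, choosing $C$ large enough (of order $1/(1-p)^2$) so that the exponent is driven to $-\infty$. The only differences are cosmetic (you use $n^k$ rather than $\exp(cd\log^2 n)$ for the union bound, and your bookkeeping with the factor of $2$ is slightly cleaner).
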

\begin{proof}
  Let $G \sim \gnpd$, and fix some constant $c$ to be determined
  later.  For a set of vertices $S\subseteq V$ of size $|S| = cd\log
  n$, let $BAD_S$ denote the event that $S$ is a clique, and let $BAD
  \deq \bigvee_S BAD_S$.  Note that there are ${n \choose cd\log n} \leq
  exp(cd\log^2 n)$ such events.  Since $G$ is $d$-dependent and
  $S\subset V$, then the subgraph induced by $S$ is also
  $d$-dependent.  Now, define $z \deq {cd\log n \choose 2}$ and let
  $X_1,\ldots, X_z$ be indicator variables for the edges in the
  subgraph induced by $S$.  Finally, let $Y \deq \sum_i X_i$.  Then,
  $E[Y] = pz$, and $BAD_S$ amounts to having $Y = z$.  By
  Lemma~\ref{lem:dependent-sum}, 
  $$\Pr[BAD_S] = \Pr[Y = z] = \Pr[Y-E[Y] \geq z(1-p)] \leq
  \exp\left(-\frac{2z^2(1-p)^2}{(d+1)z}\right) \\ =
  \exp\left(-\frac{2z(1-p)^2}{d+1}\right)\ . $$ Choosing $c =
  1/(1-p)^2$ and using a union bound yields
  \begin{align*}
    \Pr[BAD] &\leq {n\choose z}\Pr[BAD_S] \leq \exp\left(cd\log^2 n -
    \frac{2(1-p)^2}{d+1}(cd\log n)^2\right) \\
    &= \exp\left(cd\log^2 n(1-2c(1-p)^2)\right) \\
    &< \exp(-\Omega(d\log^2 n))\ ,
  \end{align*} 
  Thus, almost surely $\gnpd$ has no clique of size $\geq cd\log n$.
\end{proof}
Our lower bound on $\chromatic(\gnpd)$ follows as a direct corollary,
since any independent set in $\gnpd$ is a clique in the complement
graph $\widebar{\gnpd}$, which is also $d$-dependent.
\begin{corollary}
  If $0<p<1$ and $d \leq n/\log^2 n$, then almost surely
  $\chromatic(\gnpd) \geq n/(d\log n)$.
\end{corollary}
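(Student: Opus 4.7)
The plan is to derive this corollary directly from the preceding theorem (the upper bound $\clique(\gnpd) = O(d\log n)$) by passing to the complement graph. The key observation is that if $G \sim \gnpd$ with edge indicators $X_e$, then the complement $\bar{G}$ has edge indicators $1 - X_e$. These complemented indicators have exactly the same dependency graph as the $X_e$'s, and each is $1$ with probability $1-p$. Hence $\bar{G}$ is itself a $d$-dependent random graph in the family $\ldgraph{n}{1-p}{d}$. Since $0 < 1-p < 1$ and the hypothesis $d \leq n/\log^2 n$ transfers verbatim, the previous theorem applies to $\bar G$.

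Thus, almost surely $\clique(\bar{G}) = O(d \log n)$. By the standard bijection between cliques in $\bar G$ and independent sets in $G$, this gives $\alpha(G) = O(d \log n)$ almost surely, where $\alpha(G)$ denotes the independence number. In any proper coloring of $G$, every color class is an independent set and so has size at most $\alpha(G)$; hence
$$\chromatic(G) \geq \frac{n}{\alpha(G)} \geq \frac{n}{O(d \log n)}\ ,$$
which is the claimed bound (absorbing the constant from the previous theorem into the asymptotic notation, or equivalently stating the bound up to a constant factor).

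There is no real obstacle here beyond verifying the complementation step, which is immediate because ``$d$-dependent'' is a statement about the dependency graph of the edge indicators, and this graph is preserved under the transformation $X_e \mapsto 1 - X_e$. The only mild point worth flagging in a write-up is that the constant hidden in ``$O(d\log n)$'' depends on $1-p$ rather than $p$, but this is harmless since $p$ is treated as an absolute constant in $(0,1)$ throughout.
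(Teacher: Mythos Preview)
Your proposal is correct and matches the paper's own argument essentially line for line: the paper simply notes that an independent set in $\gnpd$ is a clique in the complement graph $\widebar{\gnpd}$, which is also $d$-dependent, and invokes the preceding theorem. Your write-up is in fact more detailed than the paper's one-sentence justification, and the extra remarks about the dependency graph being preserved under $X_e \mapsto 1 - X_e$ and the constant depending on $1-p$ are accurate and harmless.
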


\section{A New Protocol for $MPJ_3$} \label{sec:mpj}
Below, we describe a family of $\mpj_3$ protocols $\{\cP_H\}$
parameterized by a bipartite graph $H = (A\cup B,E)$ with $|A| = |B| =
n$.  In each protocol $\cP_H$, Alice and Bob each independently send a
single message to Carol, who must take the messages and the input she
sees and output $\mpj_3(i,f,x)$.  Bob's communication in each protocol
is simple: given $i$, he sends $x_j$ for each $j$ such that $(i,j) \in
H$.  Alice's message is more involved.  Given $H$ and $f$, she
partitions $[n]$ into \emph{clusters}.  For each cluster in the
partition, she sends the \xor\ of the bits for $x$.  (e.g. if one cluster
is $\{1,3,5\}$, then Alice would send $x[1] \oplus x[3] \oplus x[5]$)
This partition of $[n]$ into clusters is carefully chosen and depends
on $H$ and $f$.  Crucially, it is possible to make this partition so
that for any inputs $i,f$, Bob sends $x[j]$ for each $j$ in the
cluster containing $f(i)$, except for possibly $x[f(i)]$ itself.  We
formalize this clustering below.  Thus, Carol can compute $x[f(i)]$ by
taking the relevant cluster from Alice's message and ``\xor-ing out''
the irrelevant bits using portions of Bob's message. 

Each protocol $\cP_H$ will correctly compute $\mpj_3(i,f,x)$; we then
use the probabilitic method to show that there exists a graph $H$ such
that $\cP_H$ is \emph{efficient}.  At the heart of this probabilistic
analysis is a bound on the chromatic number of a dependent random
graph.  For functions with large preimages, this
dependency becomes too great to handle.

\begin{definition}
  A function $f: [n] \rightarrow [n]$ is \emph{$d$-limited} if 
  $|f^{-1}(j)| \leq d$ for all $j \in [n]$.
\end{definition}

We end up with a protocol $\cP_H$ that is efficient for all inputs
$(i,f,x)$ as long as $f$ is $d$-limited ($d \approx \log n$ suffices);
later, we generalize $\cP_H$ to work for all inputs.

\begin{remark}
  This construction is inspired by the construction of \pudlak et
  al.~\cite{PudlakRS97}, who gave a protocol for $\mpj_3$ that works
  in the special case that the middle layer is a \emph{permutation}
  $\pi$ instead of a general function $f$.  They also use the
  probabilistic method to show that one $\cP_H$ must be efficient.
  The probablistic method argument in our case depends on the
  chromatic number of a dependent random graph; the analysis of the
  permutation-based protocol in~\cite{PudlakRS97} relied on the
  chromatic number of the standard random graph $\gnp$.
\end{remark}

\paragraph{Description of $\cP_H$.}
Let $H = (A\cup B, E)$ be a bipartite graph with $|A| = |B| = n$.
Given $H$ and $f$, define a graph $G_{f,H}$ by placing $(i,j) \in
G_{f,H}$ if and only if both $(i, f(j)) $ and $(j, f(i))$ are in
H.  Let $C_1, \ldots, C_k$ be a clique cover of $G_{f,H}$,
and for each $1 \leq \ell \leq k$, let $S_\ell \deq \{f(j) : j \in
C_\ell\}$.

The protocol $\cP_H$ proceeds as follows.  Given $f$ and $x$, Alice
constructs $G_{f,H}$.  For each clique $C_\ell$, Alice sends $b_\ell
\deq \bigoplus_{j \in S_\ell} x[j]$.  Bob, given $i$ and $x$, sends
$x[j]$ for all $(i,j) \in H$.  We claim these messages enable Carol to
recover $\mpj_3(i,f,x)$.  Indeed, given $i$ and $f$, Carol computes
$G_{f,H}$.  Let $C$ be the clique in the clique cover of $G_{f,H}$
containing $i$, and let $S \deq \{f(j) : j \in C\}$ and $b \deq
\bigoplus_{j \in S} x[j]$.  Note that Alice sends $b$.  Also note that
for any $j \neq i \in C$, there is an edge $(i,j) \in G_{f,H}$.  By
construction, this means that $(i,f(j)) \in H$, so Bob sends
$x[f(j)]$.  Thus, Carol computes $x[f(i)]$ by taking $b$ (which Alice
sends) and ``XOR-ing out'' $x[f(j)]$ for any $j \neq i \in C$.
In this way, $\cP_H$ computes $\mpj_3$.

%
While $\cP_H$ computes $\mpj_3$, it might not do so in a
communication-efficient manner.  The following lemma shows that there
is an efficient protocol whenever $f$ has small preimages.

\begin{lemma}\label{lem:pH-cost}
  For any $d \leq n^{o(1)}$, there exists a bipartite graph $H$ such
  that for all $i\in [n], x \in \{0,1\}^n$, and all $d$-limited
  functions $f$, we have $$\cost(\cP_H) = O\left(n \frac{\log \log
    n}{\log n}\right)\ .$$
\end{lemma}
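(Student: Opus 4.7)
The plan is to construct $H$ probabilistically. Let $p \deq c\,(\log\log n)/\log n$ for a suitable constant $c$, and form $H$ by including each edge of $K_{n,n}$ independently with probability $p$. I will show that with positive probability $H$ simultaneously satisfies (a) every vertex in the $A$-side has degree $O(pn)$, and (b) for every $d$-limited $f$, the chromatic number of $\overline{G_{f,H}}$ is $O(n\log(1/p)/\log n)$. Under (a), Bob sends $O(pn) = O(n\log\log n/\log n)$ bits; under (b), Alice sends at most $\chromatic(\overline{G_{f,H}}) = O(n\log\log n/\log n)$ bits (one bit per clique in a clique cover of $G_{f,H}$). Together these give the claimed cost.

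The first step is immediate from a Chernoff bound plus a union bound over the $n$ vertices of $A$: the bad event ``$\deg_H(i) > 2pn$'' has probability $\exp(-\Omega(pn))$, so the probability that some vertex is bad is $o(1)$.

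The heart of the argument is step two: identifying $G_{f,H}$ as a dependent random graph of the sort controlled by Theorem~\ref{thm:chromatic-ub}. Fix a $d$-limited $f$. An edge $(i,j) \in G_{f,H}$ is the AND of the two independent Bernoulli$(p)$ indicators for the $H$-edges $e_{ij} \deq (i,f(j))$ and $e_{ji} \deq (j,f(i))$, so it appears with probability $p^2$. Two edges $(i,j)$ and $(i',j')$ of $G_{f,H}$ are independent unless $\{e_{ij},e_{ji}\} \cap \{e_{i'j'},e_{j'i'}\} \neq \emptyset$. A case analysis on which of the four equalities holds pins down one of $i',j'$ and forces the other into $f^{-1}(f(i))$ or $f^{-1}(f(j))$; since $f$ is $d$-limited, each equality contributes at most $d$ neighbors in the dependency graph, so $G_{f,H}$ is $O(d)$-dependent. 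Consequently $\overline{G_{f,H}}$ is an $O(d)$-dependent random graph with edge probability $q = 1-p^2$, and the hypotheses of Theorem~\ref{thm:chromatic-ub} are satisfied: $3/4 < q < 1-n^{-\eps/4}$ because $p^2 = \Theta((\log\log n/\log n)^2)$ lies in $(n^{-\eps/4},1/4)$ for large $n$, and $d \leq n^{o(1)}$ by hypothesis. The theorem therefore yields
\[
\chromatic(\overline{G_{f,H}}) \;\leq\; (1+o(1))\,\frac{-n\log(1-q)}{\log n} \;=\; (1+o(1))\,\frac{-2n\log p}{\log n} \;=\; O\!\left(\frac{n\log\log n}{\log n}\right),
\]
with failure probability at most $\exp(-n^{1+\eps})$.

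The final step is a union bound over $f$. There are at most $n^n = \exp(n\log n)$ functions $f:[n]\to[n]$, which is dwarfed by the $\exp(-n^{1+\eps})$ failure probability, so with probability $1-o(1)$ the degree bound (a) and the chromatic bound (b) hold simultaneously for every $d$-limited $f$. Any $H$ in this nonempty good event yields a protocol $\cP_H$ with the required cost. The main technical obstacle is the dependency-structure calculation for $G_{f,H}$: one must verify carefully that the $d$-limited hypothesis is exactly what keeps the dependence degree at $O(d)$, since a single $H$-edge $(u,v)$ can be the ``pointer edge'' for any pair $(i,j)$ with $i=u$ and $j \in f^{-1}(v)$, so allowing large preimages would blow up the dependence and invalidate the application of Theorem~\ref{thm:chromatic-ub}.
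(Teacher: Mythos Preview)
Your proposal is correct and follows essentially the same approach as the paper: choose $H$ at random with edge probability $p=\Theta((\log\log n)/\log n)$, bound Bob's cost by a Chernoff bound on the degrees in $H$, observe that for $d$-limited $f$ the graph $G_{f,H}$ is an $O(d)$-dependent random graph with edge probability $p^2$, invoke Theorem~\ref{thm:chromatic-ub} on the complement to bound Alice's cost, and finish with a union bound over all $i$ and all $f$ against the $\exp(-n^{1+\eps})$ failure probability. Your dependency count via the four $H$-edge coincidences is the same computation the paper does (the paper records it as $2d-2$, you as $O(d)$), and your verification of the hypotheses $3/4<q<1-n^{-\eps/4}$ is a detail the paper leaves implicit.
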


Before proving Lemma~\ref{lem:pH-cost}, let us see how this
gives the general upper bound.

\begin{theorem}[Restatement of Theorem~\ref{thm:mpj-ub}]
  $\dcc(\mpj_3) = O(n (\log \log n)/\log n)$.
\end{theorem}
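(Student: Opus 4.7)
The plan is to reduce the general case to the $d$-limited case handled by Lemma~\ref{lem:pH-cost}, by sending auxiliary information about the ``heavy'' preimages of $f$ separately. This is a standard trick in pointer-jumping upper bounds, and it works cleanly here because the heavy part of $f$ has a small range, while the rest can be replaced with a $d$-limited surrogate that both Alice and Carol can compute from $f$.

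Concretely, I would set $d \deq \ceil{\log n}$ and, given the input $f$, define the heavy set $H_f \deq \{j \in [n] : |f^{-1}(j)| > d\}$. A double-counting argument gives $|H_f| \leq n/d = O(n/\log n)$. I then build a $d$-limited surrogate $f' : [n] \to [n]$ by setting $f'(i) = f(i)$ whenever $f(i) \notin H_f$, and reassigning $f'(i)$ to non-heavy targets (deterministically, e.g. in lexicographic order) for the remaining $i$. A slack calculation shows that this is always possible: non-heavy $j$ have room for $d - |f^{-1}(j)|$ more preimages, and the total slack exceeds the number of indices that need reassignment (roughly $dn - n \gg n$ for $d \geq 2$). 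Since Alice and Carol both see $f$, they can both compute $f'$ from $f$ using the same rule, so $f'$ is common knowledge between them.

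The protocol then has three parts. Alice sends $x[j]$ for each $j \in H_f$ in sorted order, at cost $|H_f| = O(n/\log n)$. In parallel, Alice and Bob execute $\cP_H$ on the modified instance $(i, f', x)$, where $H$ is the bipartite graph from Lemma~\ref{lem:pH-cost}; this contributes $O(n (\log\log n)/\log n)$ bits, and crucially Bob's message depends only on $i$, $x$, and $H$, so it is oblivious to the replacement of $f$ by $f'$. Carol decodes as follows: she checks whether $f(i) \in H_f$ (she sees $f$ and $i$); if so, she reads $x[f(i)]$ directly from Alice's heavy list, and otherwise $f(i) = f'(i)$, so she runs the $\cP_H$ decoding to recover $x[f'(i)] = x[f(i)]$. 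Summing the two costs gives the claimed $O(n (\log\log n)/\log n)$ bound.

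The only real subtlety is verifying that Carol can faithfully simulate $\cP_H$ with $f'$ in place of $f$: this requires both Alice and Carol to produce the same clique cover of $G_{f',H}$, which holds because $f'$ is a deterministic function of $f$. Everything else amounts to bookkeeping, and no new probabilistic argument is required beyond what Lemma~\ref{lem:pH-cost} already provides.
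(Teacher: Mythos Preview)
Your proposal is correct and follows essentially the same approach as the paper: set $d=\log n$, have Alice and Carol deterministically replace $f$ by a $d$-limited surrogate agreeing with $f$ on non-heavy inputs, run $\cP_H$ on the surrogate (noting Bob's message is oblivious to $f$), and have Alice separately send $x[j]$ for the at most $n/d$ heavy targets $j$. Your extra care in verifying that a $d$-limited surrogate actually exists (the slack calculation) is a detail the paper glosses over, but otherwise the arguments coincide.
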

\begin{proof}
  Fix $d = \log n$ and let $\cP_H$ be the protocol guaranteed by
  Lemma~\ref{lem:pH-cost}.  We construct a general protocol $\cP$ for
  $\mpj_3$ as follows.  Given $f$, Alice and Carol select a
  $d$-limited function $g$ such that $g(j) = f(j)$ for all $j$ such
  that $|f^{-1}(f(j))| \leq d$.  Note that Alice and Carol can do this
  without communication, by selecting (say) the lexicographically
  least such $g$.  On input $(i,f,x)$, Alice sends the message she
  would have sent in $\cP_H$ on input $(i,g,x)$, along with $x[j]$
  for all $j$ with large preimages.  Bob merely sends the message he
  would have sent in $\cP_H$.  If the preimage of $f(i)$ is large,
  then Carol recovers $x[f(i)]$ directly from the second part of
  Alice's message.  Otherwise, Carol computes $\mpj_3(i,g,x)$ using
  $\cP_H$.  Since $f(i)$ has a small preimage, we know that $x[g(i)] =
  x[f(i)] = \mpj_3(i,f,x)$, so in either case Carol recovers
  $\mpj_3(i,f,x)$.

  The communication cost of $\cP$ is the cost of $\cP_H$, plus one bit
  for each $j$ with preimage $|f^{-1}(j)| > d$.  There are at most
  $n/d$ such $j$.  With $d = \log n$ and using
  Lemma~\ref{lem:pH-cost}, the cost of $\cP$ is
  $$\cost(\cP) \leq \cost(\cP_H) + n/d = O(n (\log \log n)/\log n) +
  O(n/\log n) = O(n (\log \log n)/\log n)\ .$$
\end{proof}
\begin{proofof}{Lemma~\ref{lem:pH-cost}}
  We use the Probabilitstic Method.  Place each edge in $H$
  independently with probability $p = \Theta\left(\frac{\log \log
    n}{\log n}\right)$. Now, for any $d$-limited function $f$, consider
  the graph $G_{f,H}$.  Each edge $(i,j)$ is in $G_{f,H}$ with
  probability $p^2$, but the edges are not independent.  However, we
  claim that if $f$ is $d$-limited, then $G_{f,H}$ is
  ($2d-2$)-dependent.  To see this, note that $(i,j)$ is in $G_{f,H}$
  if both $(i,f(j))$ and $(j, f(i))$ are in $H$.  Therefore, $(i,j)$
  is dependent on (i) any edge $(i,j')$ such that $f(j') = f(j)$, and
  (ii) any edge $(i',j)$ such that $f(i) = f(i')$.  Since $f$ is
  $d$-limited, there are at most $d-1$ choices each for $i'$ and $j'$.
  Thus, each edge depends on at most $2d-2$ other edges, and $G_{f,H}$
  is $(2d-2)$-dependent.

  In $\cP_H$, Alice sends one bit per clique in the clique cover of
  $G_{f,H}$.  Bob sends one bit for each neighbor of $i$ in $H$.
  Thus, we'd like a graph $H$ such that every $i\in [n]$ has a few
  neighbors and every $d$-limited function $f$ has a small clique cover.

  Let $BAD_i$ denote the event that $i$ has more than $2pn$ neighbors
  in $H$.  By a standard Chernoff bound argument, $\Pr[BAD_i] \leq
  \exp(-np^2/2)$.  Next, let $BAD_f$ be the event that at least
  $(1+\eps)\frac{-n\log(p^2)}{\log n}$ cliques are needed to cover the
  vertices in $G_{f,H}$.  Note that any clique in $G_{f,H}$ is an
  independent set in the complement graph $\widebar{G_{f,H}}$, so the
  clique cover number of $G_{f,H}$ equals the chromatic number of
  $\widebar{G_{f,H}}$.  Also note that $\widebar{G_{f,H}}$ is itself a
  $d$-dependent random graph, with edge probability $q = 1-p^2$.
  Therefore, by Theorem~\ref{thm:chromatic}, $\Pr[BAD_f] <
  \exp(-n^{1+\eps})$.  Finally, let $BAD \deq \left(\bigvee_i
  BAD_i\right)\bigvee\left(\bigvee_{\mbox{d-limited } f}
  BAD_f\right)$.  There are $n$ indices $i$ and at most $n^n \leq
  \exp(n\log n)$ $d$-limited functions $f$.  Therefore, buy a union
  bound we have $$\Pr[BAD] < n\Pr[BAD_i] + n^n \Pr[BAD_f] < n
  e^{-\frac{np^2}{2}} + n^n e^{-n^{1+\eps}} < 1 .$$ Therefore, there
  exists a good $H$.  Also note that in $\cP_H$ for a good $H$, Alice
  and Bob each communicate $O(n\frac{\log \log n}{\log n})$ bits.  This
  completes the proof.
\end{proofof}

\paragraph{Simultaneous Messages.}
%
We conclude this section by showing how to convert
$\cP_H$ into an SM protocol.  Observe that Carol selects a bit from
Alice's message (namely, the clique containing $f(i)$) and a few bits
from Bob's message (the neighbors of $i$ in $H$) and \xor s them
together.  To convert $\cP_H$ to an SM protocol, Alice and Bob send
the same messages as in $\cP_H$.  Carol, given $i$ and $f$, sends a
bitmask describing which bit from Alice's message and which bits from
Bob's message are relevant.  The Referee then \xor s these bits
together, again producing $\mpj_3(i,f,x)$.  Carol sends one bit for
each bit of communication sent by Alice and Bob.  Thus, this SM
protocol costs twice as much as the cost of $\cP_H$.  
We get the following result.

\begin{lemma}[Restatement of Lemma~\ref{lem:mpj-sm-ub}]
  $\smcc(\mpj_3) = O(n\frac{\log \log n}{\log n})$.
\end{lemma}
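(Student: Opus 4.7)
The plan is to convert the one-way protocol $\cP_H$ from the proof of Theorem~\ref{thm:mpj-ub} into a simultaneous-message protocol of essentially the same cost, by shifting the role of ``interpreting'' Alice's and Bob's messages from Carol to the Referee. The key observation, already flagged in the paragraph preceding the lemma, is that in $\cP_H$ the output is produced as an XOR of a handful of bits whose identities are determined by Carol's input $(i,f)$ alone.

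First I would fix the bipartite graph $H$ guaranteed by Lemma~\ref{lem:pH-cost}, together with the reduction from general $f$ to $d$-limited $f$ used inside the proof of Theorem~\ref{thm:mpj-ub}. Recall that in $\cP_H$ Alice's message $m_A$ depends only on $(f,x)$ and Bob's message $m_B$ depends only on $(i,x)$; neither depends on the other player's input, so they can be sent in parallel. Carol's computation of $\mpj_3(i,f,x)$ is the XOR of (i) one bit of $m_A$, namely either the parity of the cluster of $G_{f,H}$ containing $i$ or, if $f(i)$ lies in the set of large-preimage indices, the explicitly transmitted bit $x[f(i)]$, and (ii) the bits $x[f(j)]$ for each $j\neq i$ in Carol's clique, each of which appears inside $m_B$ because $(i,f(j))\in H$ by construction of $G_{f,H}$.

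For the SM version, Alice and Bob send the same messages $m_A$ and $m_B$ as in $\cP_H$. Carol, who knows $i$ and $f$, computes exactly which positions inside $m_A$ and $m_B$ carry the relevant bits, independently of $x$, and sends a bitmask of length $|m_A|+|m_B|$ selecting those positions. The Referee XORs the selected bits and outputs the result, which equals $\mpj_3(i,f,x)$ by correctness of $\cP_H$.

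It remains to bound the cost. By Lemma~\ref{lem:pH-cost} together with the large-preimage reduction, $|m_A|,|m_B| = O(n(\log\log n)/\log n)$, so Carol's bitmask has length $O(n(\log\log n)/\log n)$ as well, and the total communication is $O(n(\log\log n)/\log n)$. There is no real obstacle here; the only point to check is that Carol's choice of mask is a function of $(i,f)$ alone, which follows because both the clique cover of $G_{f,H}$ and the list of $H$-neighbors of $i$ are computable from $(i,f,H)$ without reference to $x$.
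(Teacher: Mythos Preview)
Your proposal is correct and follows essentially the same approach as the paper: Alice and Bob send their $\cP_H$ messages unchanged, Carol sends a bitmask of length $|m_A|+|m_B|$ indicating the relevant positions, and the Referee XORs the selected bits. You are slightly more careful than the paper in explicitly folding in the large-preimage reduction from the proof of Theorem~\ref{thm:mpj-ub}, but the argument is otherwise identical.
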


\section{Proofs of Main Technical Lemmas}\label{sec:maintechnical}
In this section, we state and prove three technical lemmas which form
key insights to our contribution.  The first lemma states that most
sets of $k$ vertices ``look independent''.  The second bounds the
expected number of \emph{intersecting} $k$-cliques.  The final lemma
gives a lower bound on the expected number of \emph{disjoint}
\independent $k$-cliques.

We remind the reader that all three lemmas apply to arbitrary
$d$-dependent random graph distributions.

\begin{lemma}[Restatement of Lemma~\ref{lem:independent}]\label{lem:independent2}
  Suppose $d$ and $k$ are integers such that $dk^3 \leq n$.
  Fix any $d$-dependent graph $\ldgraph{n}{p}{d}$, and let
  $S$ be a set of $k$ vertices uniformly chosen from $V$.  Then, we have
  $$\Pr[S \mbox{ is \independent}] \geq 1-\frac{3dk^3}{2n}\ .$$
\end{lemma}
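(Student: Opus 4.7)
The plan is a standard first-moment argument on conflicting pairs of potential edges landing inside $S$. Let $B$ count the unordered pairs $\{e,e'\}$ of distinct potential edges with $e,e'\in\binom{S}{2}$ such that $e$ and $e'$ are adjacent in the dependency graph of $\gnpd$. Then $S$ fails to be \independent exactly when $B\geq 1$, so by Markov's inequality it suffices to show $\E[B] \leq \frac{3dk^3}{2n}$.

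By linearity of expectation, $\E[B] = \sum_{\{e,e'\}\text{ dependent}} \Pr[e,e'\subseteq S]$. The dependency graph lives on the $N=\binom{n}{2}$ potential edges of $K_n$ and has maximum degree at most $d$, so the number of unordered dependent pairs is at most $Nd/2$. For any such pair, the union $T$ of their endpoints has $|T|\in\{3,4\}$ since $e\neq e'$, and the event that both $e$ and $e'$ lie inside $S$ is exactly $T\subseteq S$. For a uniformly random $k$-subset this has probability $\binom{n-|T|}{k-|T|}/\binom{n}{k}$, which is maximized in the $|T|=3$ case and is bounded by
$$\frac{k(k-1)(k-2)}{n(n-1)(n-2)} \;\leq\; \frac{k^3}{n(n-1)(n-2)}.$$

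Combining the two bounds gives
$$\E[B] \;\leq\; \frac{Nd}{2}\cdot\frac{k^3}{n(n-1)(n-2)} \;=\; \frac{dk^3}{4(n-2)},$$
and the hypothesis $dk^3\leq n$ forces $n$ sufficiently large (in particular $n\geq 4$), after which $\frac{1}{4(n-2)}\leq\frac{3}{2n}$ is immediate, yielding the claimed inequality.

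I do not anticipate any real obstacle; the argument is two applications of a union bound. The only points requiring care are (i) recognizing that the dependency graph's vertices are \emph{potential edges} of $K_n$, so its degree bound $d$ directly bounds the count of conflicting pairs without any extra factor of $n$, and (ii) that a dependent pair of edges may share zero or one common vertex, producing $|T|=4$ or $|T|=3$ respectively; the three-vertex case dominates and drives the final bound.
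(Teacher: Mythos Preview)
Your argument is correct and in fact cleaner than the paper's. The paper splits the failure event into \emph{local} dependencies (the two correlated potential edges share a vertex) and \emph{remote} ones (they are vertex-disjoint), handling the remote case by exposing vertices of $S$ one at a time and bounding the set of ``bad'' next vertices, and handling the local case by a separate union bound over pairs $(i,j)$; summing the two contributions yields $\frac{dk^3}{n}+\frac{dk^3}{2n}=\frac{3dk^3}{2n}$. You bypass the case split entirely: you bound the global number of dependent pairs by $Nd/2$ and the probability that any fixed pair's endpoint set lands in $S$ by the worst case $|T|=3$, giving $\E[B]\leq \frac{dk^3}{4(n-2)}$, which is roughly a factor of $6$ stronger than what the paper obtains and then relaxes to the stated bound. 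Two cosmetic points: first, ``exactly when $B\geq 1$'' is slightly off, since adjacency in the dependency graph does not force genuine dependence---you really have $\{B=0\}\subseteq\{S\text{ uncorrelated}\}$, which is the direction you need; second, the hypothesis $dk^3\leq n$ does not by itself force $n\geq 4$ (take $d=k=1$), but the cases $k\leq 2$ or $d=0$ make $B\equiv 0$ trivially, and in the remaining case $k\geq 3$, $d\geq 1$ one has $n\geq 27$, so the final inequality $\frac{1}{4(n-2)}\leq\frac{3}{2n}$ holds with room to spare.
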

\begin{proof}
  We divide the possible conflicts into two classes, bound the
  probability of each, and use a union bound.  Say that correlated
  edges are \emph{local} if they share a vertex.  Otherwise, call them
  \emph{remote}.  Let $\cL$ and $\cR$ be the events that $S$ contains
  a local and remote dependency respectively.

  First, we bound $\Pr[\cR]$.  Imagine building $S$ by picking
  vertices $v_1,\ldots, v_k$ one at a time uniformly.  Let $S_i \deq
  \{v_1,\ldots, v_i\}$, and let $B_i$ be the the set of vertices that
  would create a remote dependency if added to $S_i$.  Note that $B_1
  = \emptyset$ since there are no edges in $S_1$ (it contains only one
  vertex).  More importantly, for $i>1$, there are at most ${i\choose
    2}\cdot (2d) < di^2$ vertices in $B_i$, because $S_i$ contains ${i\choose
    2}$ edges; each edge depends on at most $d$ other edges, and each
  of these edges contributes at most two vertices to $B_i$.  It
  follows that $\cR$ is avoided if $v_{i+1} \not\in B_i$ for each
  $i=2\ldots k-1$.  There are $(n-i)$ choices for $v_{i+1}$, so
  $$\Pr[\lnot \cR] \geq
  \prod_{i=2}^{k-1}\left(1-\frac{di^2}{n-i}\right) \geq
  \left(1-\frac{dk^2}{n-k}\right)^{k-2} \geq 1-\frac{dk^3}{n}\ ,$$
  Hence $\Pr[\cR] \leq dk^3/n$.  At first glance, it might appear like
  we've handled local dependencies as well.  However, it is possible
  that when adding $v_i$, we add local dependent edges, if these
  edges are both adjacent to $v_i$.  Thus, we handle this case
  separately.

  Let $\cL_{ij}$ denote the event that $i,j\in S$ and there are no
  local dependencies in $S$ involving $(i,j)$.  Call a vertex $\ell$
  bad for $(i,j)$ if either $(i,\ell)$ or $(j,\ell)$ depend on
  $(i,j)$.  There are at most $d$ bad vertices for $(i,j)$.  Note that
  $\Pr[i,j\in S] = {n-2\choose k-2}/{n \choose k} = k(k-1)/n(n-1)$ and that 
  \begin{align*}    
    \Pr[\lnot \cL_{ij} | i,j \in S] &\geq {n-2-d \choose k-2}/{n-2
      \choose k-2} \\
    & \geq \prod_{z=0}^{d-1} \left(1-\frac{k-2}{n-2-z}\right) \\
    & \geq \left(1-\frac{k-2}{n-2-d}\right)^d \\
    & \geq 1-\frac{d(k-2)}{n-2-d} \\
    & \geq 1-\frac{dk}{n}\ .
  \end{align*}
  It follows that $\Pr[\cL_{ij}] = \Pr[i,j \in S]\Pr[\cL_{ij} | i,j
    \in S] \leq \frac{k(k-1)}{n(n-1)} \cdot \frac{dk}{n}$.  There are
  ${n \choose 2}$ possible pairs $i,j$, so by a union bound, we have
  $\Pr[\cL] \leq \frac{n(n-1)}{2}\frac{k(k-1)}{n(n-1)}\frac{dk}{n}
  \leq \frac{dk^3}{2n}$.  Another union bound on $\cR$ and $\cL$
  completes the lemma.
\end{proof}

\begin{lemma}\label{claim:W-ub2}
  Let $d,p,k$ be such that $k < \frac{\log(n/(2d\log^3n))}{\log 1/p}$.
  Fix a $d$-dependent random graph distribution $\gnpd$.  Let
  $G\sim \gnpd$, and let $W$ be the set of ordered pairs $(S,T)$ such
  that $S,T$ are intersecting \independent $k$-cliques.  Then, $$E[|W|]
  \leq 2k{n\choose k}p^{2{k\choose 2}-1}{k \choose 2}{n \choose
    2}\ .$$
\end{lemma}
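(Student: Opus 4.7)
The plan is a first-moment calculation, grouping ordered pairs $(S,T)$ by the overlap $j\deq|S\cap T|$, which ranges over $\{1,\ldots,k-1\}$. For each $j$, the combined edge set $E(S)\cup E(T)$ contains $2\binom{k}{2}-\binom{j}{2}$ distinct edges ($\binom{j}{2}$ of them shared between $S$ and $T$), and the number of ordered pairs of $k$-subsets with $|S\cap T|=j$ is at most $\binom{n}{k}\binom{k}{j}\binom{n-k}{k-j}$.

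The key observation is that whenever $S\cup T$ (as a $(2k-j)$-subset of $[n]$) is itself \independent, all edges in $E(S)\cup E(T)$ are mutually independent, and the joint probability factorises cleanly as $\Pr[S,T\text{ both cliques}]=p^{2\binom{k}{2}-\binom{j}{2}}$. Applying Lemma~\ref{lem:independent2} with $2k-j$ in place of $k$, all but a $3d(2k-j)^3/(2n)$ fraction of $(2k-j)$-subsets are \independent, so these ``good'' pairs account for nearly all candidates. For the residual ``bad'' pairs---where $S$ and $T$ are individually \independent but $S\cup T$ is not---I would fall back on the universal estimate $\Pr[S,T\text{ both cliques}]\leq p^{\binom{k}{2}}$ and bound the bad-pair count using Lemma~\ref{lem:independent2} again; the hypothesis $k\leq \log(n/(2d\log^3 n))/\log(1/p)$ is calibrated so that this bad contribution is absorbed by the main good-case term. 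Summing the good contributions, the sum $\sum_j\binom{n}{k}\binom{k}{j}\binom{n-k}{k-j}p^{2\binom{k}{2}-\binom{j}{2}}$ is dominated by the $j=2$ term under our parameter regime (the ratio of successive terms shrinks geometrically for $j\geq 2$), and after standard binomial estimates the total matches the stated bound, with the $O(k)$ subdominant terms absorbed into the $2k$ prefactor.

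The main obstacle is that $S$ and $T$ being individually \independent does \emph{not} imply $S\cup T$ is \independent: edges in $E(S)\setminus E(T)$ can be correlated with edges in $E(T)\setminus E(S)$, so the clean factorisation of the joint probability can fail on the bad pairs. Carefully combining the count bound from Lemma~\ref{lem:independent2} with the $d$-dependence hypothesis to control this bad-pair contribution is the technical crux of the argument.
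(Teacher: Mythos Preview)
Your decomposition by overlap size and your identification of the core difficulty---that $S$ and $T$ being individually \independent does not make $S\cup T$ \independent---are both correct. The gap is in how you dispose of the ``bad'' pairs.

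The proposed fallback, bounding $\Pr[S,T\text{ both cliques}]\le p^{\binom{k}{2}}$ on the bad set and controlling the bad-pair count by the $O(dk^3/n)$ fraction from Lemma~\ref{lem:independent2}, is far too crude. Summing over $j$ via Vandermonde, the bad contribution is at most roughly $\frac{12dk^3}{n}\binom{n}{k}^{2} p^{\binom{k}{2}}$, while the good $j{=}2$ term (and the target bound) is of order $\binom{n}{k}\binom{k}{2}\binom{n}{k-2}p^{2\binom{k}{2}-1}$. Using $\binom{n}{k}/\binom{n}{k-2}\approx n^2/k^2$, their ratio is $\sim \frac{dn}{k}\, p^{-\binom{k}{2}+1}$. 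Under the hypothesis we only have $p^{-k}\le n/(2d\log^3 n)$, so $p^{-\binom{k}{2}}=(p^{-k})^{(k-1)/2}$ is of order $n^{(k-1)/2}$ up to polylogarithmic factors; the bad term therefore swamps the good term by a superpolynomial factor. The hypothesis on $k$ is \emph{not} calibrated to make this all-or-nothing dichotomy work.

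The paper replaces the dichotomy with a graded stratification. Fix $S$ and $X=S\cap T$ with $|X|=\ell$, and let $M\subset [n]\setminus S$ be the set of vertices incident to some edge that depends on an edge of $\SnotX{S}{X}$; then $|M|\le d\bigl(\binom{k}{2}-\binom{\ell}{2}\bigr)$. Any edge of $T$ with no endpoint in $M$ is genuinely independent of the event ``$S$ is a clique'', so conditioning only costs you the edges of $T$ touching $M\cup X$. Stratifying by $\ell'=|T\cap M|$, one loses a factor $p^{-\ell\ell'-\binom{\ell'}{2}}$ in probability but pays only $\binom{|M|}{\ell'}$ in the count; the hypothesis on $k$ is precisely what makes the ratio of consecutive $\ell'$-terms at most $1/2$, so the sum over $\ell'$ telescopes to at most $2$ times the $\ell'{=}0$ term. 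This recovers, uniformly over all $(S,X)$, the clean estimate $F(\ell)\le 2p^{\binom{k}{2}-\binom{\ell}{2}}\binom{n-k-|M|}{k-\ell}$, after which your final step (domination of the $\ell$-sum by the $\ell{=}2$ term) does go through.
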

\textbf{Note:} To understand the relationship between $d,k,p,n$, it is
helpful to consider the case $d=n^{o(1)}$.  In this setting, the lemma
holds as long as $k \leq (1-o(1))\frac{\log n}{\log 1/p}$.
\begin{proof}
  Let $S,T$ be arbitrary sets of $k$ vertices, and let $X = S\cap T$.
  We calculate $E[|W|]$ by iterating over all possible values of $S,X$
  and for each pair, counting the expected number of $T$ such that
  $S\cap T = X$ and $S,T$ are both $k$-cliques.  For $S,X$, let
  $F(S,X)$ be the expected number of \independent $k$-cliques $T$ such
  that $S\cap T = X$, conditioned on $S$ being a $k$-clique.  Also let
  $F(\ell)$ be the maximum of all $F(S,X)$, taken over all $S$ and all
  $X\subset S$ with $|X| = \ell$.  We have
  \begin{align}
    E[|W|] &= \sum_S \Pr[S \iskclique]\sum_{X\subset S} \sum_{T: S\cap
      T=X} \Pr[T \iskclique|S \mbox{ is k-clique}] \\ &= \sum_S p^{k
      \choose 2} \sum_{X\subset S} F(S,X) \\ &\leq \sum_{S} p^{{k
        \choose 2}} \sum_{\ell=2}^{k-1} \sum_{\substack{X\subset
        S\\ |X| = \ell}} F(\ell) \\ &\leq {n \choose k}p^{{k \choose
        2}}\sum_\ell {k \choose \ell}F(\ell)\ .\label{eqn:EW}
  \end{align}
  Next, we obtain an upper bound on $F(\ell)$.  Since we need only an
  upper bound, we take a very pessimistic approach.  Let $M \subset
  [n]\sm S$ be the set of vertices adjacent to an edge $e$ that
  depends on some edge from $\SnotX{S}{X}$.  Each edge in
  $\SnotX{S}{X}$ depends on at most $d$ other edges, and there are ${k
    \choose 2} - {\ell \choose 2}$ edges in $\SnotX{S}{X}$.
  Therefore, $|M| \leq d({k\choose 2} - {\ell \choose 2})$.  Now, let
  $E(M)$ be the set of edges with one endpoint in $M$ and the other
  endpoint in $M\cup X$.  Each of these edges may be correlated with
  edges in $\SnotX{S}{X}$, so for any $e \in E(M)$ we assume only
  $\Pr[e|S \iskclique] \leq 1$.  On the other hand, by construction
  any edge $e$ \emph{not} in $E(M)$ is independent of $S$, and
  therefore $\Pr[e \in G|S \iskclique] = p$.  Next, we sum over all
  possible $T$, grouping by how much $T$ instersects $M$.  Suppose
  $|T\cap M| = \ell'$ for some $0 \leq \ell' \leq k-\ell$.  Then, $T$
  contains ${k \choose 2}$ edges, ${\ell \choose 2}$ of these edges
  have both endpoints in $X$, and are fixed after conditioning on $S$
  being a $k$-clique.  Of the remaining edges, $\ell\cdot\ell' +
  {\ell' \choose 2}$ are in $E(M)$; the rest are independent of $S$.
  Thus, when $|T\cap M| = \ell'$, then $\Pr[T \iskclique | S
    \iskclique] \leq p^{{k \choose 2} - {\ell \choose 2} - \ell \ell'
    - {\ell' \choose 2}}$.

  \begin{align}
    F(\ell) &= \sum_{T:S\cap T = X} \Pr[T \iskclique | S \iskclique]
    \\ &= \sum_{\ell'=0}^{k-\ell} \sum_{\substack{T: S\cap T = X\\|T
        \cap M| = \ell'}} \Pr[T \iskclique | S \iskclique] \\ &\leq
    \sum_{\ell'=0}^{k-\ell} {M\choose \ell'}{n-k-M \choose k-\ell
      -\ell'} p^{{k \choose 2} - {\ell \choose 2} - \ell \ell' -
      {\ell' \choose 2}} \\ &= p^{{k \choose 2} - {\ell \choose
        2}}\sum_{\ell'=0}^{k-\ell} F^*(\ell')\ ,\label{eq:fstar}
  \end{align}
  where $F^*(\ell') \deq {M\choose \ell'}{n-k-M \choose k-\ell -\ell'}
  p^{- \ell \ell' - {\ell' \choose 2}}$.  Next, we show that the
  summation in Equation (\ref{eq:fstar}) telescopes.

  \begin{claim}\label{claim:fstar-telescope}
    If $k \leq \frac{\log\left(\frac{n}{2d\log^3 n}\right)}{\log
      1/p}$ then $\sum_{\ell' = 0}^{k-1} F^*(\ell') \leq 2F^*(0)$.
  \end{claim}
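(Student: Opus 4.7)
The strategy is to show that the sequence $F^*(0), F^*(1), \ldots$ decays geometrically with ratio at most $1/2$, so that
\[
\sum_{\ell'=0}^{k-1} F^*(\ell') \;\leq\; \sum_{\ell'\geq 0} 2^{-\ell'} F^*(0) \;\leq\; 2F^*(0),
\]
which is exactly the claim.

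First I would compute the ratio of consecutive terms explicitly. Using $\binom{\ell'+1}{2} - \binom{\ell'}{2} = \ell'$ together with the standard identities for adjacent binomial coefficients, the definition of $F^*$ gives
\[
\frac{F^*(\ell'+1)}{F^*(\ell')} \;=\; \frac{M-\ell'}{\ell'+1}\cdot\frac{k-\ell-\ell'}{n-2k-M+\ell+\ell'+1}\cdot p^{-(\ell+\ell')}.
\]
Then I would bound each of the three factors separately. From the setup preceding Equation~(\ref{eq:fstar}), $M \leq d\binom{k}{2} \leq dk^2/2$, which bounds the first factor by $dk^2/2$. Since $dk^2 = o(n)$ under our hypothesis on $k$, the middle denominator is $(1-o(1))n$, and so the middle factor is at most $2k/n$. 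Finally, since $\ell \geq 2$ and $\ell' \leq k-\ell$ we have $\ell + \ell' \leq k$, so $p^{-(\ell+\ell')} \leq p^{-k}$; and the hypothesis $k \leq \log(n/(2d\log^3 n))/\log(1/p)$ is precisely the statement $p^{-k} \leq n/(2d\log^3 n)$.

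Multiplying the three bounds gives
\[
\frac{F^*(\ell'+1)}{F^*(\ell')} \;\leq\; \frac{dk^2}{2}\cdot\frac{2k}{n}\cdot\frac{n}{2d\log^3 n} \;=\; \frac{k^3}{2\log^3 n}.
\]
This is where the slightly baroque form of the hypothesis earns its keep: the $d$ coming from $|M|$ cancels the $d$ in the hypothesis, the $n$ from the middle denominator cancels the $n$ in the numerator $n/(2d\log^3 n)$, and the surviving $\log^3 n$ is exactly what is needed to absorb $k^3$. Under the ambient assumption $p < 1/4$ used in Theorem~\ref{thm:clique} (so $\log(1/p) > \log 4 > 1$), the hypothesis forces $k < \log n$, and hence the ratio is at most $1/2$, completing the geometric bound and the claim. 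The main obstacle is not a new probabilistic idea but simply keeping the three factors in the right alignment so that the hypothesis on $k$ can be applied cleanly; everything else is bookkeeping.
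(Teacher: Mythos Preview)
Your proof is correct and follows essentially the same approach as the paper: compute the ratio $F^*(\ell'+1)/F^*(\ell')$, bound the three factors using $M \leq dk^2/2$, the denominator $n - o(n)$, and $p^{-(\ell+\ell')} \leq p^{-k} \leq n/(2d\log^3 n)$, and conclude the ratio is at most $k^3/(2\log^3 n) < 1/2$ so the sum telescopes geometrically. If anything you are slightly more careful than the paper, since you write the ratio correctly and explicitly justify $k < \log n$ via the ambient assumption $p < 1/4$.
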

  \begin{proof}
    Fix any $0 \leq i < k-\ell$, and consider $F^*(i+1)/F^*(i)$.
    Using ${a \choose b+1}/{a \choose b} = \frac{a-b}{b+1}$ and ${a
      \choose b-1}/{a \choose b} = \frac{b}{a-b-1}$ and recalling that
    $M < d{k\choose 2}$, we have:
    \begin{align*}
      \frac{F^*(i+1)}{F^*(i)} &= \frac{{M\choose i+1}{n-k-M \choose
          k-(i+1)}p^{- \ell (i+1) -(i+1)i/2}}{{M \choose i}{n-k-M
          \choose k-i}p^{-\ell i - i(i-1)/2}} \\ &=
      \frac{M-1}{i+1}\frac{k-i}{n-k-M-k+i}p^{-\ell -i} \\ &\leq
      \frac{dk^2}{2}\frac{k}{n-o(n)}\left(\frac{1}{p}\right)^k \\ &<
      \frac{dk^3}{n} \left(\frac{1}{p}\right)^k \\ &<
      \frac{k^3}{2\log^3 n} \\ &< 1/2\ ,
    \end{align*}
    where the penultimate inequality holds because of our assumption
    on $k$, and the final inequality holds because $k < \log n$.
    We've shown that for all $i$, $F^*(i+1)/F^*(i) < 1/2$.  Hence
    $F^*(i) < F^*(0)2^{-i}$, and so $\sum_{\ell'} F^*(\ell') \leq
    \sum_{\ell'}F^*(0)2^{-\ell'} \leq 2F^*(0)$.
  \end{proof}
  From claim~\ref{claim:fstar-telescope}, we see that $$F(\ell) \leq
  p^{{k\choose 2}-{\ell \choose 2}} \sum_{\ell' = 0}^{k-\ell}
  F^*(\ell') \leq 2p^{{k\choose 2}-{\ell \choose 2}} F^*(0) =
  2p^{{k\choose 2}-{\ell \choose 2}} {n-k-M \choose k-\ell}\ .$$ Now,
  plugging this inequality back into Equation~\ref{eqn:EW}, we get
  $$E[|W|] \leq {n\choose k}p^{k \choose 2}\sum_\ell {k \choose
    \ell}F(\ell) \leq 2{n \choose k}p^{k \choose 2}\sum_\ell {k
    \choose \ell} p^{{k \choose 2}-{\ell \choose 2}}{n-k-M \choose
    k-\ell}\ .\label{eqn:partialW}$$
    Let $G(\ell) \deq p^{{k \choose 2} - {\ell \choose 2}}{k \choose
      \ell}{n-k-M \choose k-\ell}$, and for $2 \leq \ell < k-1$, let
    $G^*(\ell) \deq G(\ell)/G(\ell+1)$.  Note that $$G^*(\ell) =
    p^\ell \frac{\ell+1}{k-\ell}\frac{n-2k-M+\ell+1}{k-\ell}\ .$$ We
    claim that $G^*(\ell)$ decreases as long as $p < 8/27-\Omega(1)$.
    To see this, note that $$\frac{G^*(\ell)}{G^*(\ell+1)} = p
    \frac{\ell+1}{\ell} \cdot \left( \frac{k-\ell+1}{k-\ell} \right)^2
    \frac{n-2k-M+\ell+1}{n-2k-M+\ell} < p (3/2)^3(1+o(1))\ ,$$ where
    the inequality holds because $(a+1)/a = 1+1/a$ and because $\ell,
    k-\ell \geq 2$ for the range of $\ell$ we need when calculating
    $G^*(\ell)$.  In a way, saying that $G^*(\ell)$ is decreasing
    amounts to saying that $G(\ell)$ is convex---once $G(i) \leq
    G(i+1)$, then $G(j) \leq G(j+1)$ for all $j>i$.  Next, a
    straightforward calculation using our choice of $k$ shows that
    $G(k-1) \leq G(2)$.  Thus, it must be the case that $G(i) \leq
    G(2)$ for all $i$, and therefore $$E[|W|] \leq 2{n\choose k}p^{k
      \choose 2}kG(2) = 2k{n \choose k}p^{2{k \choose 2} - 1}{k
      \choose 2}{n-k-M \choose k-2} < 2k{n \choose k}p^{2{k\choose
        2}-1}{k \choose 2}{n \choose k-2}\ .$$ This completes the
    proof of Lemma~\ref{claim:W-ub2}.
\end{proof}
    
Finally, we prove the lemma that in any $d$-dependent graph
distribution, the expected number of \emph{disjoint} \independent
$k$-cliques is large.  Recall that $Y$ is the maximal number of
disjoint \independent $k$-cliques.
\begin{lemma}[Restatement of Lemma~\ref{lem:Y}]
  $\mathbb{E}[Y] \geq \frac{n^2p}{19k^5}.$
\end{lemma}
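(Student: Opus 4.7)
The plan is to lower bound $\mathbb{E}[Y]$ by a Tur\'an-style second-moment argument on the random collection of \independent $k$-cliques, using Lemma~\ref{lem:independent} for the first moment and Lemma~\ref{claim:W-ub2} to control pairwise interactions.

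Let $X$ denote the number of \independent $k$-cliques in $G\sim\gnpd$, and let $W$ denote the number of ordered intersecting pairs from Lemma~\ref{claim:W-ub2}. For the first moment, observe that if a $k$-set $S$ is \independent, then the $\binom{k}{2}$ edge-indicators on $S$ form an independent set in the dependency graph, and hence are mutually independent (not merely pairwise), so $\Pr[S\text{ is a clique}\mid S \text{ is \independent}]=p^{\binom{k}{2}}$. Combining with Lemma~\ref{lem:independent} yields
\[
\mathbb{E}[X]\;\ge\;\Bigl(1-\tfrac{3dk^3}{2n}\Bigr)\binom{n}{k}p^{\binom{k}{2}},
\]
and the prefactor is at least $1/2$ under the hypotheses $d/p\le n^{1/2-\eps}$ and $k=O(\log n)$.

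Next, build a conflict graph $H$ whose vertex set is the collection of \independent $k$-cliques in $G$ and whose edges are the unordered pairs $\{S,T\}$ with $|S\cap T|\ge 2$. Two \independent $k$-cliques are edge-disjoint iff they share at most one vertex, so $Y=\alpha(H)$. The Caro--Wei bound combined with Cauchy--Schwarz gives $\alpha(H)\ge |V(H)|^2/(|V(H)|+2|E(H)|)=X^2/(X+W)$, using $|E(H)|=W/2$. A second application of Cauchy--Schwarz in the form $\mathbb{E}[U^2/V]\cdot \mathbb{E}[V]\ge \mathbb{E}[U]^2$, with $U=X$ and $V=X+W$, then gives
\[
\mathbb{E}[Y]\;\ge\;\frac{\mathbb{E}[X]^2}{\mathbb{E}[X]+\mathbb{E}[W]}.
\]

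It remains to verify by direct computation that the right-hand side is at least $n^2p/(19k^5)$. I split into two cases based on which of $\mathbb{E}[X]$ and $\mathbb{E}[W]$ dominates. In the $W$-dominated regime, the first-moment bound above combined with $\mathbb{E}[W]\le 2k\binom{n}{k}\binom{k}{2}\binom{n}{k-2}p^{2\binom{k}{2}-1}$ from Lemma~\ref{claim:W-ub2}, together with $\binom{n}{k}/\binom{n}{k-2}=(n-k+1)(n-k+2)/(k(k-1))\sim n^2/k^2$ and the cancellation $p^{2\binom{k}{2}}/p^{2\binom{k}{2}-1}=p$, gives $\mathbb{E}[X]^2/\mathbb{E}[W]=\Omega(n^2p/k^5)$. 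In the complementary regime $\mathbb{E}[Y]\ge \tfrac14\binom{n}{k}p^{\binom{k}{2}}$, and the hypothesis $k\le \log(n/(2d\log^3 n))/\log(1/p)$ is precisely what keeps $\binom{n}{k}p^{\binom{k}{2}}$ comfortably above $n^2p/k^5$. The main obstacle will be careful constant-tracking so that the final estimate yields the stated $1/19$ (rather than some nearby fraction), together with checking that the case split works uniformly over the full allowed range of $k$ --- particularly for very small $k$, where exponents like $\binom{k}{2}-1$ become $0$ or $1$ and must be handled individually.
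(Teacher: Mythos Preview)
Your approach is correct and reaches the same bound, but by a genuinely different route than the paper.  The paper uses the random-sparsification-plus-deletion (alteration) method: it samples each \independent $k$-clique independently with an explicitly chosen probability
\[
\gamma \;=\; \frac{1}{12k\,\binom{k}{2}\binom{n}{k-2}\,p^{\binom{k}{2}-1}},
\]
then deletes both members of every intersecting pair that survives, obtaining $E[|L|] \ge \gamma\,E[|K|] - 2\gamma^2\,E[|W|]$ and plugging in directly.  Because $\gamma$ is tuned so that the deletion term is a fixed fraction of the sampling term, the computation is a single chain of inequalities with no case split.  Your Caro--Wei/Tur\'an route instead bounds $\alpha(H)\ge X^2/(X+W)$ pointwise and then passes to expectations via Cauchy--Schwarz, which is conceptually cleaner (no auxiliary randomness) but forces the case analysis you flag at the end.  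The two methods are classically interchangeable for this kind of second-moment independent-set extraction and yield the same $\Theta(n^2p/k^5)$ up to constants; the paper's choice simply sidesteps the need to verify separately that $\binom{n}{k}p^{\binom{k}{2}}$ is large in the $E[X]$-dominated regime.  One minor remark: your justification that an \independent $k$-set has clique probability exactly $p^{\binom{k}{2}}$ relies on \emph{mutual} independence of the edges, whereas Definition~\ref{def:independent} only asserts pairwise independence; this is harmless because the proof of Lemma~\ref{lem:independent} actually establishes the stronger property that the edges form an independent set in the dependency graph, and the paper uses the same implicit step.
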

\begin{proof}
    We construct $Y$ probabilistically, by selecting each potential
    \independent $k$-clique with small probability and removing any
    pairs of $k$-cliques that intersect.
%
%
    Let $K$ denote the family of \independent $k$-cliques.  By
    Lemma~\ref{lem:independent} and our choice of $d$, a randomly
    chosen set $S$ of $k$ vertices is \independent with probability at
    least $2/3$.
    By this and our choice of $k$, we have 
    $$\E[\vert K \vert] \geq \frac{2}{3}\binom{n}{k}
    p^{\binom{k}{2}}\ .$$
    Recall that $W$ is the set of ordered pairs $\{S,T\}$ of \independent
    $k$-cliques such that $ 2 \leq \vert S \cap T \vert < k.$ For our
    argument, we require an upper bound on $\E[|W|]$.  In the standard
    random graph model, if $|S\cap T| = \ell$, then $\Pr[S,T \mbox{
        both k-cliques}] = p^{{k \choose 2} - {\ell \choose 2}}$.
    However, this no longer holds for $d$-dependent distrubtions, even
    if $S$ and $T$ are both \independent.  This is because while edges
    in $S$ and $T$ are independent, edges in $S$ but not $T$ may be
    correlated with edges in $T$ but not $S$.  As an extreme case,
    suppose all edges in $S$ are independent, but each edge in $S\sm
    T$ is completely correlated with an edge in $T\sm S$.  Then,
    $\Pr[S,T \mbox{ k-cliques}] = \Pr[S \iskclique] = \Pr[T
      \iskclique] = p^{{k\choose 2}}$. Essentially, allowing edges to
    be correlated has the potential to drive up the variance on the
    number of $k$-cliques, even when these $k$-cliques are
    \independent.  This is perhaps to be expected.  Nevertheless, in
    Lemma~\ref{claim:W-ub2}, we were able to show that when $d$ is
    small, this increase is not much more than in the standard graph
    model.

With this claim, we are now able to construct a large set of disjoint
\independent $k$-cliques with high probability.  Create
$K'\subseteq K$ by selecting each uncorrelated $S\in K$ independently
with probability $$\Pr[S \in K'] = \gamma = \frac{1}{12kp^{{k\choose
      2}-1}{k\choose 2}{n \choose k-2}}\ .$$ Finally, create $L$ from
$K'$ by removing each pair $S,T \in K'$ such that ${S,T} \in W$.  By
construction, $L$ is a set of edge-disjoint \independent $k$-cliques;
furthermore, we have
    \begin{align*}
      E[|L|] &= \gamma E[|K|] - 2\gamma^2 E[|W|] \\ &\geq
      \frac{2\gamma}{3}{n\choose k}p^{{k \choose 2}} - \frac{2\gamma
        \cdot2k {n\choose k}p^{2{k\choose 2}-1}{k\choose 2}{n\choose
          k-2}}{12kp^{{k\choose 2}-1}{k\choose 2}{n \choose k-2}}
      \\ &= \frac{2\gamma}{3}{n\choose k}p^{{k\choose
          2}}-\frac{\gamma}{3}{n\choose k}p^{{k\choose 2}} \\ &=
      \frac{\gamma}{3}{n\choose k}p^{k\choose 2} \\ &= \frac{{n\choose
          k}p^{k\choose 2}}{3\cdot 12k p^{{k\choose 2}-1} {k\choose
          2}{n\choose 2}} \\ &\geq \frac{{n\choose k}}{{n\choose k-2}}
      \frac{p}{36k} \frac{1}{{k\choose 2}} \\ &\geq \frac{p}{18k^3}
      \frac{{n\choose k}}{{n\choose k-2}} \\ &= \frac{p}{18k^3}
      \frac{(n-k-2)(n-k-1)}{k(k-1)} \\ &\geq
      \frac{p}{18k^3}\frac{18n^2}{19k^2} \\ &= \frac{n^2 p}{19k^5} \ ,
    \end{align*}
    where the final inequality holds for large enough $n$.
\end{proof}

\section{Results for Non-Boolean Pointer Jumping}\label{sec:nonbool}
In this section, we leverage the protocol for $\mpj_3$ to achieve new
results for the non-Boolean Pointer Jumping problem $\mpjh$.  Let $\cQ$ be the
protocol for $\mpj_3$ given in Lemma~\ref{lem:mpj-sm-ub}.
First, we give a protocol for $\mpjh_3$.  The cost matches the upper
bound from~\cite{DammJS98} but has the advantange of working in the
Simultaneous Messages model.
\begin{lemma}[Restatement of Lemma~\ref{lem:mpjh-sm-ub}]
  There is an $O(n \log \log n)$-bit SM protocol for $\mpjh_3$.
\end{lemma}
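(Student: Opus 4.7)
The plan is to reduce $\mpjh_3$ to $\log n$ parallel instances of the Boolean problem $\mpj_3$, each solved with the SM protocol $\cQ$ from Lemma~\ref{lem:mpj-sm-ub}, and then argue that the simulation is NOF-compatible.

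First I would unpack the output bit-by-bit. The value $\mpjh_3(i,f_2,f_3)=f_3(f_2(i))$ is a string of length $\log n$, so for each $b\in[\log n]$ define $x_b\in\{0,1\}^n$ by $x_b[j]\deq (f_3(j))[b]$. Then $x_b[f_2(i)]=(f_3(f_2(i)))[b]$, so $\mpj_3(i,f_2,x_b)$ computes exactly the $b$-th bit of $\mpjh_3(i,f_2,f_3)$. Therefore computing all $\log n$ of these Boolean instances and concatenating the answers yields $\mpjh_3(i,f_2,f_3)$.

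Next I would verify NOF compatibility, which is the only real thing to check. In $\mpjh_3$ Alice holds $i$, Bob holds $f_2$, and Carol holds $f_3$; in the Boolean instance for bit $b$ the roles require Alice to hold $i$, Bob to hold $f_2$, and Carol to hold $x_b$. In $\cQ$, Alice's message depends on the inputs not on her forehead, namely $f_2$ and $x_b$; she sees $f_2$ on Bob's forehead and can compute $x_b$ from $f_3$ on Carol's forehead. Bob's message depends on $i$ and $x_b$, which he likewise sees via Alice's and Carol's foreheads. Carol's SM message depends on $i$ and $f_2$, which she sees directly. So every player can independently simulate their role in the $b$-th instance of $\cQ$ without any extra communication, and the three parties simply run $\cQ$ in parallel for $b=1,\ldots,\log n$. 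The referee decodes each instance separately and outputs the concatenation.

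Finally I would add up the cost: Lemma~\ref{lem:mpj-sm-ub} gives $\smcc(\mpj_3)=O(n\log\log n/\log n)$, and running $\log n$ copies in parallel yields total communication $\log n\cdot O(n\log\log n/\log n)=O(n\log\log n)$, as claimed. There is no real obstacle here beyond the NOF bookkeeping above; the entire proof is a black-box application of Lemma~\ref{lem:mpj-sm-ub} combined with the observation that both Alice and Bob can read $f_3$ off Carol's forehead and hence can materialize all $\log n$ strings $x_b$ locally.
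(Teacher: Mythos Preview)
Your proposal is correct and is essentially the same argument as the paper's: decompose $f_3$ into its $\log n$ bit-slices $z_b$ (your $x_b$), run the SM protocol $\cQ$ from Lemma~\ref{lem:mpj-sm-ub} in parallel on the $\log n$ instances $(i,f_2,z_b)$, and multiply the cost by $\log n$. Your added NOF-compatibility check (that Alice and Bob can each materialize $x_b$ from $f_3$ on Carol's forehead, while Carol's message in $\cQ$ depends only on $i$ and $f_2$) is a useful elaboration the paper leaves implicit, but the approach is identical.
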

\begin{proof}
  Run $\cQ$ $\log n$ times in parallel, on inputs $(i,f_2,z_1),
  (i,f_2, z_2), \ldots, (i, f_2, z_{\log n})$, where $z_j$ denotes the
  $j$th most significant bit of $f_3$.  This allows the Referee to
  recover each bit of $f_3(f_2(i)) = \mpjh(i,f_2,f_3)$.
\end{proof}

Next we give a new upper bound for $\mpjh_4$.  As far as we know, this
is the first protocol for $\mpjh_k$ for any $k$ that uses a sublinear
amount of communication.
\begin{theorem}[Restatement of Theorem~\ref{thm:mpjh-ub}]
  There is a one-way protocol for $\mpjh_4$ with cost $O(n \frac{(\log
    \log n)^2}{\log n})$.
\end{theorem}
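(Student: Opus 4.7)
The plan is to build a four-player one-way protocol that extends the $\mpj_3$ protocol $\cP_H$ of Lemma~\ref{lem:pH-cost} and exploits $\plr_3$'s extra knowledge of $f_4$ (on top of $i$ and $f_2$) to beat the naive $O(n \log\log n)$ bound one gets by invoking the $\mpjh_3$ protocol directly.

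First I would reduce $\mpjh_4(i, f_2, f_3, f_4)$ to $\mpjh_3(i, f_2, h)$ with $h := f_4 \circ f_3$. Since both $\plr_1$ and $\plr_2$ know $h$, the standard $\mpjh_3$ protocol---Alice $= \plr_1$, Bob $= \plr_2$, Carol $= \plr_4$, realised by $\log n$ parallel copies of $\cP_H$---already achieves cost $O(n \log\log n)$, which falls short of the target by exactly a factor $\log n / \log\log n$. The extra factor must come from $\plr_3$, who sees $f_4$ directly and can compute $j' := f_2(i)$; these are precisely the two ingredients a $\cP_H$-style ``Bob'' for the inner function $f_4$ would need.

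Concretely, I would split the $\log n$-bit output $f_4(f_3(f_2(i))) \in [n]$ into $\log\log n$ chunks of $\log n / \log\log n$ bits each and run a four-player modification of $\cP_H$ for each chunk. In this modified sub-protocol, $\plr_1$ still sends XOR-per-cluster messages for a bipartite graph $H$ with edge probability $p = \Theta(\log\log n/\log n)$; $\plr_2$ sends an $O(\log\log n)$-bit summary of the intermediate values $h(k) = f_4(f_3(k))$ for each neighbour $k$ of $i$ in $H$; and $\plr_3$ contributes targeted $f_4$-based messages anchored at $j'$ that allow $\plr_4$ to recover the missing bits of the chunk. The aim is that each chunk costs $O(n \log\log n / \log n)$ (matching a Boolean $\cP_H$); summing over $\log\log n$ chunks then yields the claimed $O(n (\log\log n)^2 / \log n)$.

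The main obstacle is proving that $\plr_3$'s contribution genuinely removes the $\log n / \log\log n$-factor blowup that a naive non-Boolean $\cP_H$ would incur. This requires a clique-cover argument for a new dependent random graph whose edges depend jointly on the random $H$ and on the auxiliary structure used by $\plr_3$. Provided its dependency degree can be bounded by $n^{o(1)}$---which should follow from applying the $O(\log n)$-limited reduction of Theorem~\ref{thm:mpj-ub} to both $f_3$ and $f_4$---Theorem~\ref{thm:chromatic-ub} supplies the chromatic-number bound needed to keep the cluster count small, exactly as in Lemma~\ref{lem:pH-cost}; summing $\plr_1$'s, $\plr_2$'s, and $\plr_3$'s communication over the $\log\log n$ chunks then completes the cost analysis.
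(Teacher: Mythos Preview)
Your proposal has a real gap. You claim each chunk of $\log n/\log\log n$ output bits can be handled at the cost of a single Boolean $\cP_H$ run, but you never say how. If $\plr_1$ sends, per cluster, the XOR of $(\log n/\log\log n)$-bit values, her message is already $(\log n/\log\log n)$ times longer than in the Boolean case, i.e.\ $\Theta(n)$ bits per chunk and $\Theta(n\log\log n)$ total---no savings. The fix is supposed to come from $\plr_3$, but ``targeted $f_4$-based messages anchored at $j'=f_2(i)$'' is not a protocol: $\plr_3$ does not see $f_3$, so knowing $j'$ and $f_4$ alone tells her nothing about which row of $f_4$ matters. You yourself flag this as the ``main obstacle'' and defer it to an unspecified new dependent-random-graph lemma; that is precisely the missing idea.

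The paper avoids all of this with a short black-box argument requiring no new random-graph analysis. The crucial move is to let $\plr_3$, not $\plr_4$, play Carol in the inner $\mpj_3$, and to target the \emph{intermediate} value $f_3(f_2(i))$ rather than the final output. Concretely: for $j=1,\dots,k$ with $k=\Theta(\log\log n)$, run the $\mpj_3$ protocol $\cQ$ on $(i,f_2,z_j)$, where $z_j[w]$ is the $j$th bit of $f_3(w)$; here $\plr_1$ (seeing $f_2,f_3$) is Alice, $\plr_2$ (seeing $i,f_3$) is Bob, and $\plr_3$ (seeing $i,f_2$) is Carol. This costs $k\cdot O(n\log\log n/\log n)$ and teaches $\plr_3$ the top $k$ bits of $f_3(f_2(i))$. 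Now $\plr_3$, who also sees $f_4$, simply sends $f_4(z)$ for every $z\in[n]$ matching that $k$-bit prefix---only $n/2^k$ values of $\log n$ bits each, another $O(n(\log\log n)^2/\log n)$ bits. Finally $\plr_4$, who sees $i,f_2,f_3$ and hence knows $f_3(f_2(i))$ exactly, reads off the answer from $\plr_3$'s list. No modification of $\cP_H$, no chunking of the output, and no new chromatic-number argument are needed.
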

\begin{proof}
  Let $i,f_2,f_3,f_4$ be the inputs to $\mpjh_4$, and for $1 \leq j
  \leq \log n$, let $z_j \in \{0,1\}^n$ be the string obtained by
  taking the $j$th most significant bit of each $f_3(w)$ (i.e.,
  $z_j[w]$ is the $j$th most significant bit of $f_3(w)$.)  Fix a
  parameter $k$ to be determined shortly.  $\plr_1,\plr_2$, and
  $\plr_3$ run $\cQ$ on $\{(i,f_2,z_j) : 1 \leq j \leq k\}$.  From
  this, $\plr_3$ learns the first $k$ bits of $f_3(f_2(i))$.  She then
  sends $f_4(z)$ for every $z \in \{0,1\}^{\log n}$ whose $k$ most
  significant bits match those of $f_3(f_2(i))$.  $\plr_4$ sees $i,
  f_2,$ and $f_3$, computes $z^* \deq f_3(f_2(i))$, and recovers
  $f_4(z^*)$ from $\plr_3$'s message.  Note that there are $n/2^k$
  strings that agree on the first $k$ bits, and for each of these
  strings, $\plr_3$ sends $\log n$ bits.  Therefore, the cost of this
  protocol is $k\cost(\cQ) + n\log(n)/2^k = O\left(k n \frac{\log \log
    n}{\log n} + n\log(n) 2^{-k}\right)$.  Setting $k \deq 2\log
  \frac{\ln 2 \log n}{\log \log n} = \Theta(\log \log n)$ minimizes
  the communication cost, giving a protocol with cost $O\left(n
  \frac{(\log \log n)^2}{\log n}\right)$.
\end{proof}

\section{Dependent Graphs with Large Cliques or Large Dependency} 
\label{sec:otherstuff}
In this section, we provide results that witness the tightness of our
current bounds.  The next lemma shows that there exist dependent
random graphs that almost surely contain cliques of size $\Omega(d)$, and others that almost surely have cliques of size $\Omega(\sqrt{d}\log(n))$.
\begin{lemma}{(Restatement of Lemma~\ref{lem:large-clique})}
  For all constant $0 < p < 1$ and $d = o(n)$,
  \begin{enumerate}
  \item
    there exists a $d$-dependent random
    graph $\gnpd$ such that $$\Pr\left[\clique(\gnpd) >
      \frac{d\sqrt{p}}{2} - d^{\frac{1}{2}}p^{\frac{1}{4}}\right] >
    1- e^{-2n/d}\ .$$
  \item
    there exists a $d$-dependent random graph $\gnpd$ such that almost
    surely $$\clique(\gnpd) =
    \Omega(\sqrt{d}\log(n))\ .$$
  \end{enumerate}
\end{lemma}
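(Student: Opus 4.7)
The plan is to exhibit two explicit $d$-dependent distributions, one for each part of the lemma.

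For part 1, partition $V$ into $\lfloor 2n/d\rfloor$ blocks of size $\lfloor d/2\rfloor$. For each vertex $u$ draw an independent $\mathrm{Bernoulli}(\sqrt p)$ variable $X_u$ and call $u$ \emph{active} if $X_u=1$. Include the intra-block edge $(u,v)$ iff $X_u=X_v=1$, and draw every inter-block edge independently with probability $p$ using fresh coins. Every edge then has marginal probability $p$. For the dependence count, an intra-block edge $(u,v)$ is a function of $X_u,X_v$, so it is correlated only with the other intra-block edges incident to $u$ or $v$, of which there are $2(|B|-2)\leq d-4$; inter-block edges use fresh randomness and are independent of everything else, so the graph is $d$-dependent. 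Inside each block the active vertices form a clique, of size $\mathrm{Bin}(\lfloor d/2\rfloor,\sqrt p)$ with mean $\approx d\sqrt p/2$. Applying the multiplicative Chernoff bound with relative deviation $\delta=2/(d^{1/2}p^{1/4})$ yields a per-block failure probability $\leq e^{-1}$. Since the block events depend on disjoint $X_u$'s, they are independent, so the probability that \emph{no} block succeeds is at most $(e^{-1})^{2n/d}=e^{-2n/d}$, completing part 1.

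For part 2, I would use a two-level construction that lifts an $\erdosrenyi$ clique into a much larger one. Partition $V$ into $m=\lfloor n/b\rfloor$ blocks of size $b=\lfloor\sqrt{d+1}\rfloor$. Draw independent $\mathrm{Bernoulli}(p)$ variables $X_B$ for each block $B$ and $Y_{B,B'}$ for each unordered pair of distinct blocks. Include every intra-block edge of $B$ iff $X_B=1$, and include all $b^2$ edges between $B$ and $B'$ iff $Y_{B,B'}=1$. Each edge has marginal probability $p$, and any edge is correlated with at most $\max(\binom{b}{2}-1,\,b^2-1)=b^2-1\leq d$ other edges, so the distribution is $d$-dependent. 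The key observation is that the \emph{meta-graph} $H$ on the $m$ blocks with edge set $\{\{B,B'\}:Y_{B,B'}=1\}$ is exactly $\rgraph{m}{p}$, and any clique $\mathcal{B}$ of size $k$ in $H$ with $X_B=1$ for every $B\in\mathcal{B}$ lifts to a clique of size $bk$ in $G$ by taking the union of all vertices in the blocks of $\mathcal{B}$. Restricting $H$ to the blocks with $X_B=1$ yields a graph distributed as $\rgraph{m'}{p}$ with $m'\sim\mathrm{Bin}(m,p)$, because the $Y$'s are independent of the $X$'s; a Chernoff bound gives $m'\geq pm/2$ with high probability, and the classical \bollobas\ theorem then guarantees a clique of size $(1-o(1))\frac{2\log m'}{\log(1/p)}=\Omega(\log n)$ in $\rgraph{m'}{p}$ almost surely, since $b=o(\sqrt n)$ forces $\log m'=\Theta(\log n)$. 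Lifting this meta-clique gives a clique of size $\Omega(b\log n)=\Omega(\sqrt d\log n)$ in $G$.

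The constructions are largely bookkeeping; the two delicate points are tuning the Chernoff calculation in part 1 to land exactly on the constant $2$ in the exponent $e^{-2n/d}$, and verifying in part 2 that the restriction of $H$ to $\{B:X_B=1\}$ is honestly $\rgraph{m'}{p}$, which relies on the independence of the $X_B$'s from the $Y_{B,B'}$'s so that conditioning on the former leaves the latter i.i.d.\ $\mathrm{Bernoulli}(p)$.
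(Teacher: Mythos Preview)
Your proposal is correct and follows essentially the same approach as the paper: both parts partition the vertices into blocks (of size $d/2$ and $\sqrt{d}$ respectively), define edges via block-indexed Bernoulli variables, and then argue via Chernoff across independent blocks in part~1 and via the \bollobas clique bound on the induced block-level \erdosrenyi graph in part~2. Your part~1 construction is in fact slightly cleaner than the paper's---you use fresh independent coins for inter-block edges, whereas the paper uses variables $X_{i,c}$ indexed by (vertex, color) pairs---but the intra-block clique argument and the independence-across-blocks calculation are identical.
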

\begin{proof}
  We give two constructions.

  For the first result, fix $d' \deq \frac{d\sqrt{p}}{2} -
  \sqrt{d\sqrt{p}}$ and $M_1 \deq 2n/d$.  Partition the vertices into
  $M_1$ sets $V_1,\ldots, V_{M_1}$ each of size $d/2$.  Let $c(i)$ denote
  the part containing $i$ (we think of $i$ has having color $c$).
  Now, let $\{X_{i,c} : i \in V, 1\leq c \leq M_1\}$ be a series of
  i.i.d.~random bits with $\Pr[X_{i,c} = 1] = \sqrt{p}$, and place
  $(i,j) \in \gnpd$ if $X_{i, c(j)} \bigwedge X_{j, c(i)} = 1$.  Thus,
  $(i,j)$ is an edge with probability $p$.  Also note that edges
  $(i,j)$ and $(i',j')$ are dependent if either $c(i) = c(i')$ or
  $c(j) = c(j')$.  Since there are $d/2$ vertices in each $V_\ell$,
  $(i,j)$ is dependent on at most $d$ other edges and $\gnpd$ is
  $d$-dependent.

  Now, fix a color $c$, and let $S_c \deq \{i : c(i) = c \wedge
  X_{i,c} = 1\}$.  For any $i,j \in S_c$ we have $X_{i,c} = X_{j,c} =
  1$ and that $c(i) = c(j) = c$.  Therefore, $(i,j) \in \gnpd$ for any
  $i,j \in S_c$, hence $S_c$ is a clique.

  Next, consider $|S_c|$.  There are $d/2$ vertices with color $c$, so
  $E[|S_c|] = \frac{d\sqrt{p}}{2}$.  By the Chernoff bound, $\Pr[|S_c|
    < d'] < \frac{1}{e}$, so the probability that there is some color
  $c$ with $|S_c| \geq d'$ is at least $1-e^{-2n/d}$.  Therefore,
  $\gnpd$ almost surely contains a clique of size at least $d'$.

  For the second graph, partition the vertices $[n]$ into $M_2 \deq
  n/\sqrt{d}$ subsets $V_1, \ldots, V_{M_2}$, each of size $\sqrt{d}$.
  Let $c(i)$ be the subset containing $i$.  Let $\{X_{c_1,c_2}: 1 \leq
  c_1,c_2 \leq M_2\}$ be a set of independent, identically distributed
  binary variables with $\Pr[X_{c,c'} = 1] = p$.  Now, place edge
  $(i,j)$ in the graph if $X_{c(i),c(j)} = 1$.  In this way, for any
  $V_s, V_t$, either all edges between $V_s$ and $V_t$ exist, or none
  do, and similarly for any $V_s$, either all edges between vertices
  in $V_s$ will be in the graph, or none will.

  Next, let $S$ be the set of all $i$ such that edges between
  vertices in $V_i$ are in the graph.  Each $i \in S$ with probability
  $p$.  By standard Chernoff bounds, $|S| \geq pM_2/2$ with
  high probability.  Let $M' \deq pM_2/2$.
  The construction above induces a new random graph $G'$ on $M'$
  vertices where all edges are i.i.d. in $G'$ with probablity $p$.
  i.e., $G'$ is an \erdosrenyi random graph on $M'$ vertices.
  By~\cite{Bollobas76}, $\clique(G') \geq 2\log(M')/\log(1/p) =
  \Omega(\log(n)/\log(1/p))$ with high probability.  Finally, a clique
  of size $k$ in $G'$ gives a clique of size $k\sqrt{d}$ in $G$, hence
  $G$ contains a clique of size $\Omega(\sqrt{d}\log(n)/\log(1/p))$
  with high probablity.
\end{proof}
Our second result in this section shows that when the dependency
factor becomes $\Omega(n)$, essentially nothing can be said about the clique number of dependent random graphs.
\begin{lemma}{(Restatement of Lemma~\ref{lem:larged})}
  Fix $d\deq 2n-2$. Then, the following statements hold.
  \begin{enumerate}
  \item For any $0<p<1$, there exists a $d$-dependent random
    graph $\gnpd$ that is bipartite with certainty.
  \item For any $1/2 \leq p < 1$, there exists a $d$-dependent random
    graph $\gnpd$ such that $\clique(\gnpd) \geq n/2$ with certainty.
  \end{enumerate}
\end{lemma}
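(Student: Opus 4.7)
The plan is to give explicit constructions for both parts under a common template.  Assign each vertex $v$ an independent label $X_v \sim \mathrm{Bernoulli}(1/2)$, and introduce an independent Bernoulli coin $R_{uv}$ for each potential edge.  Declare $(u,v) \in G$ to be a deterministic function of $(X_u, X_v, R_{uv})$.  Under this template, two edges $(u,v)$ and $(u',v')$ with $\{u,v\}\cap\{u',v'\}=\emptyset$ depend on disjoint sets of underlying variables and so are independent; hence each edge depends on at most $2(n-2) = 2n-4 \leq d$ others, and the $d$-dependency bound falls out for free.  All that remains is to pick the function and the bias of $R_{uv}$ to hit marginal $p$ together with the desired global structure.

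For part~1, take $R_{uv} \sim \mathrm{Bernoulli}(2p)$ and include $(u,v)$ in $G$ iff $X_u \neq X_v$ and $R_{uv} = 1$.  The graph is then deterministically bipartite between $A \deq \{v : X_v = 0\}$ and $B \deq \{v : X_v = 1\}$, with marginal $\tfrac{1}{2}\cdot 2p = p$.  Strictly speaking this works in the range $0 < p \leq 1/2$; for larger $p$ no bipartite distribution can even realize marginal $p$, since a bipartite graph has at most $n^2/4$ edges while the expected edge count of $\gnpd$ is $p\binom{n}{2}$.

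For part~2, take $R_{uv} \sim \mathrm{Bernoulli}(2p-1)$ (well-defined since $p \geq 1/2$) and include $(u,v)$ in $G$ iff $X_u = X_v$ or $R_{uv} = 1$, giving marginal $\tfrac{1}{2} + \tfrac{1}{2}(2p-1) = p$.  The key observation is that whenever $X_u = X_v$ the edge is forced into $G$, so both $A \deq \{v : X_v = 0\}$ and $B \deq \{v : X_v = 1\}$ induce complete subgraphs with certainty; since $|A| + |B| = n$ deterministically, the larger of the two is a clique of size at least $\lceil n/2 \rceil$, in every realization.  The only mildly subtle step in either part is the marginal calculation; the conceptual content is that putting exactly one independent bit per vertex introduces just enough correlation to enforce the global property (bipartiteness, or a forced $n/2$-clique) while keeping the dependency graph sparse enough for the stated bound on $d$.
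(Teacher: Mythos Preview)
Your constructions are correct and follow the same core idea as the paper: put an independent Bernoulli label $X_v$ on each vertex and make the presence of $(u,v)$ depend only on $X_u$, $X_v$ (and possibly a fresh edge coin), so that edges on disjoint vertex pairs are automatically independent and the $(2n-2)$-dependency bound is immediate.  The paper does without your extra coins $R_{uv}$, instead tuning the bias of $X_v$ directly: for part~2 it sets $\Pr[X_v=1]=q$ with $q^2+(1-q)^2=p$ and takes $(u,v)\in G$ iff $X_u=X_v$; for part~1 it sets $\Pr[X_v=1]=q$ with $2q(1-q)=p$ and takes $(u,v)\in G$ iff $X_u\neq X_v$.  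Your variant with fair $X_v$ and a separately tuned $R_{uv}$ is an equally valid way to hit the marginal, and it makes the dependency accounting slightly cleaner.

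You also put your finger on a genuine defect in the stated lemma.  Part~1 cannot hold for all $0<p<1$, by exactly the edge-count argument you sketch: a bipartite graph on $n$ vertices has at most $\lfloor n^2/4\rfloor$ edges, so any distribution that is bipartite with certainty has expected edge count at most $n^2/4$, forcing $p\leq n/(2(n-1))$.  Your construction covers $0<p\leq 1/2$, which is the right range up to an $O(1/n)$ slack; the paper's construction, once one corrects its printed formula for $q_1$ (the value $q_1=1-\sqrt{1-p}$ does not in fact satisfy $2q_1(1-q_1)=p$), likewise requires $p\leq 1/2$ for a real solution $q$ to exist.  So your proof is complete over the range in which the claim is actually true, and your caveat about larger $p$ is well taken.
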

\begin{proof}
  We again provide two constructions.  For the first construction, set
  $q_1\deq 1-\sqrt{1-p}$, and let $X_1,\ldots, X_n$ be i.i.d. random
  bits such that $X_i =1$ with probability $q_1$.  Think of each
  $X_i$ as being assinged to vertex $v_i$.  Now, place edge $(i,j) \in
  \gnpd$ iff $X_i \oplus X_j = 1$.  Note that $(i,j)\in \gnpd$ with
  probability $2q(1-q) = p$.  It is easy to see that $(i,j)$ depends
  on $(i',j')$ only if either $i=i'$ or $j=j'$.  There are at most
  $2(n-1)$ such edges, hence the random graph is $d$-dependent.
  Finally, we claim that the graph is bipartite.  To see this, suppose
  for the sake of contradiction that $\gnpd$ contains an odd cycle
  $(1,2, \ldots, 2k+1, 1)$.  Without loss of generality, assume that
  $X_1=1$ (the proof is similar if $X_1=0$.)  Since each edge $(i,i+1)
  \in \gnpd$, we must have that $X_2,X_4,\ldots, X_{2k}$ all equal
  $0$, and $X_1,X_3,\ldots, X_{2k+1}$ all equal $1$.  But then
  $X_1=X_{2k+1} = 1$, hence $(1,2k+1) \not\in \gnpd$.  This
  contradicts the assumption that $(1,2,\ldots, 2k+1,1)$ is a cycle.

  We proceed with the second construction in a similar manner.  Let
  $q_2 \deq \frac{1}{2}(1-\sqrt{2p-1})$, and let $X_1,\ldots, X_n$ be
  i.i.d. random bits with $\Pr[X_i = 1] = q_2$.  This time, place
  $(i,j) \in \gnpd$ iff $X_i = X_j$.  Note that $(i,j)$ is an edge
  with probablity $q_2^2 + (1-q_2)^2 = p$.  Now, let $S_0 \deq
  \{i:X_i=0\}$ and similarly $S_1 \deq \{i:X_i=1\}$.  It is easy to
  see that $S_0$ and $S_1$ are both cliques in $\gnpd$.  One of them must
  contain at least half the vertices.
\end{proof}

\bibliographystyle{plain}
\bibliography{super}

\end{document}